\newcommand{\sfrac}[2]{\textstyle \frac{#1}{#2}}
\newcommand{\ba}{\begin{array}}
\newcommand{\ea}{\end{array}}
\renewcommand{\Re}{\mathfrak{R}}
\DeclareMathAlphabet{\mathpzc}{OT1}{pzc}{m}{it}
\newcommand{\fc}{\ensuremath{f_\text{c}}}
\newcommand{\Ta}{\ensuremath{T_{\text{a}}}}
\newcommand{\Na}{\ensuremath{N_{\text{a}}}}
\newcommand{\Ts}{\ensuremath{T_{\text{s}}}}
\newcommand{\Pmin}{\ensuremath{P_{\text{min}}(z,\bm{\rho})}}
\newcommand{\PminC}{\ensuremath{P_{\text{min,C}}(z,\bm{\rho})}}
\newcommand{\PminN}{\ensuremath{P_{\text{min,N}}(z,\bm{\rho})}}
\newcommand{\PminCR}{\ensuremath{P_{\text{min,C}}(z,f(\tilde{\bm{\rho}}))}}
\newcommand{\PminNR}{\ensuremath{P_{\text{min,N}}(z,f(\tilde{\bm{\rho}}))}}
\newcommand{\ACf}{\ensuremath{A(z,\bm{\rho})}}
\newcommand{\ACfAst}{\ensuremath{A^{\ast}(z,\bm{\rho})}}
\newcommand{\ACc}{\ensuremath{A_{\text{C}}(z,\bm{\rho})}}
\newcommand{\ACn}{\ensuremath{A_{\text{N}}(z,\bm{\rho})}}
\newcommand{\ACcR}{\ensuremath{A_{\text{C}}(z,f(\tilde{\bm{\rho}}))}}
\newcommand{\ACnR}{\ensuremath{A_{\text{N}}(z,f(\tilde{\bm{\rho}}))}}
\newcommand{\va}{\ensuremath{v_{\text{A}}(z,\bm{x})}}
\newcommand{\ACtilde}{\ensuremath{\tilde{A}(z,\bm{x})}}
\newcommand{\ZZBRho}{\ensuremath{\sigma_{\text{ZZB}}^2(\bm{\rho})}}
\newcommand{\ZZBRhoMaxL}{\ensuremath{\sigma_{\text{ZZB}}^2(\max_L(\bm{\rho}))}}
\newcommand{\ZZBRhoR}{\ensuremath{\sigma_{\text{ZZB}}^2(f(\tilde{\bm{\rho}}))}}
\newcommand{\ZZBcRho}{\ensuremath{\sigma_{\text{ZZB,C}}^2(\bm{\rho})}}
\newcommand{\ZZBcRhoR}{\ensuremath{\sigma_{\text{ZZB,C}}^2(f(\tilde{\bm{\rho}}))}}
\newcommand{\ZZBnRho}{\ensuremath{\sigma_{\text{ZZB,N}}^2(\bm{\rho})}}
\newcommand{\ZZBnRhoR}{\ensuremath{\sigma_{\text{ZZB,N}}^2(f(\tilde{\bm{\rho}}))}}
\newcommand{\Kz}{\ensuremath{\mathcal{K}_{0}}}
\newcommand{\Ko}{\ensuremath{\mathcal{K}_{1}}}
\newcommand{\KzC}{\ensuremath{\check{\mathcal{K}}_{0}}}
\newcommand{\KoC}{\ensuremath{\check{\mathcal{K}}_{1}}}
\newcommand{\kbranch}{\ensuremath{k_{\text{b}}}}
\DeclareAcronym{awgn}{short = AWGN, long = additive white Gaussian noise}
\DeclareAcronym{ici}{short = ICI, long = intercarrier interference}
\DeclareAcronym{sinr}{short = SINR, long = signal-to-interference-and-noise ratio}
\DeclareAcronym{snr}{short = SNR, long = signal-to-noise ratio}
\DeclareAcronym{ofdm}{short = OFDM, long = orthogonal frequency-division multiplexing}
\DeclareAcronym{mimo}{short = MIMO, long = multiple-input multiple-output}
\DeclareAcronym{cfo}{short = CFO, long =  carrier frequency offset}
\DeclareAcronym{cpe}{short = CPE, long = common phase error}
\DeclareAcronym{lte}{short = LTE, long = long term evolution}
\DeclareAcronym{nr}{short = NR, long = new radio}
\DeclareAcronym{crlb}{short = CRLB, long = Cramer-Rao lower bound}
\DeclareAcronym{acf}{short = ACF, long = autocorrelation function}
\DeclareAcronym{zzb}{short = ZZB, long = Ziv-Zakai bound}
\DeclareAcronym{mmse}{short = MMSE, long = minimum mean square error}
\DeclareAcronym{lmmse}{short = LMMSE, long = linear minimum mean square error}
\DeclareAcronym{rmse}{short = RMSE, long = root mean square error}
\DeclareAcronym{mrc}{short = MRC, long = maximum-ratio combining}
\DeclareAcronym{toa}{short = TOA, long = time-of-arrival}
\DeclareAcronym{pdf}{short = PDF, long = probability density function}
\DeclareAcronym{cdf}{short = CDF, long = cumulative distribution function}
\DeclareAcronym{dft}{short = DFT, long = discrete Fourier transform}
\DeclareAcronym{jcas}{short = JCAS, long = joint communications and sensing}
\DeclareAcronym{isac}{short = ISAC, long = integrated sensing and communications}
\DeclareAcronym{prs}{short = PRS, long = positioning reference signal}
\DeclareAcronym{sdr}{short = SDR, long = software-defined radio}
\DeclareAcronym{gnss}{short = GNSS, long = global navigation satellite system}
\DeclareAcronym{tdoa}{short = TDOA, long = time-difference-of-arrival}
\DeclareAcronym{rtt}{short = RTT, long = round-trip-time}
\DeclareAcronym{aoa}{short = AOA, long = angle-of-arrival}
\DeclareAcronym{aod}{short = AOD, long = angle-of-departure}
\newtheorem{theorem}{Theorem}
\begin{document}
%%% ----------------------------------------------------------Front Matter
\title{Ziv-Zakai-Optimal OFDM Resource Allocation for Time-of-Arrival Estimation}

\author{
  \IEEEauthorblockN{Andrew M. Graff\IEEEauthorrefmark{1}, Todd E. Humphreys\IEEEauthorrefmark{2}\\
  \IEEEauthorblockA{\IEEEauthorrefmark{1}\textit{Department of Electrical and Computer Engineering, The University of Texas at Austin}} \\
  \IEEEauthorblockA{\IEEEauthorrefmark{2}\textit{Department of Aerospace
  		Engineering and Engineering Mechanics, The University of Texas at Austin}} }
}

\maketitle

\begin{abstract}
	This paper presents methods of optimizing the placement and power allocations of pilots in an \ac{ofdm} signal to minimize \ac{toa} estimation errors under power and resource allocation constraints. \ac{toa} errors in this optimization are quantified through the \ac{zzb}, which captures error thresholding effects caused by sidelobes in the signal's \ac{acf} which are not captured by the \acl{crlb}. This paper is the first to solve for these \ac{zzb}-optimal allocations in the context of \ac{ofdm} signals, under integer resource allocation constraints, and under both coherent and noncoherent reception. Under convex constraints, the optimization of the \ac{zzb} is proven to be convex; under integer constraints, the optimization is lower bounded by a convex relaxation and a branch-and-bound algorithm is proposed for efficiently allocating pilot resources. These allocations are evaluated by their \acp{zzb} and \acp{acf}, compared against a typical uniform allocation, and deployed on a \acl{sdr} \ac{toa} measurement platform to demonstrate their applicability in real-world systems.
\end{abstract}

\begin{IEEEkeywords} 
OFDM; positioning; Ziv-Zakai; convex optimization.
\end{IEEEkeywords}

%%% -------------------------- Select preprint or "for submission" version
\newif\ifpreprint
%\preprintfalse
\preprinttrue

\ifpreprint

\pagestyle{plain}
\thispagestyle{fancy}  
\fancyhf{} 
\renewcommand{\headrulewidth}{0pt}
\rfoot{\footnotesize \bf This work has been submitted to the IEEE for possible publication. Copyright may\\be transferred without notice, after which this version may no longer be accessible.}
\lfoot{\footnotesize \bf
  Copyright \copyright~2024 by Andrew M. Graff \\ and Todd E. Humphreys}

\else

\thispagestyle{empty}
\pagestyle{empty}

\fi

%%% ----------------------------------------------------------------------

\section{Introduction}
\acresetall
% High level motivation
Demand for accurate positioning services is growing rapidly. As next-generation networks continue to trend toward denser deployments and wider bandwidths, they have garnered increased interest as a provider of positioning services, both as a precise alternative to \acl{gnss} services and as an enabler of key environment-aware features in future 6G networks. Existing networks operate almost ubiquitously through \ac{ofdm}, permitting positioning signals to be multiplexed with communications data across both time and frequency resources. These positioning resources, henceforth referred to as pilot resources, are known by users within the network, allowing receivers to correlate against the pilot resources to obtain \ac{toa} estimates for use in several positioning protocols. Wireless networks face the difficult task of balancing the allocation of resources between these two services, with positioning users demanding more precise localization, and communications users demanding faster data rates. In such dual-functional networks, positioning signals that achieve optimal localization performance while subject to power and resource constraints are desirable. This paper explores the optimal allocation of pilot resources in an \ac{ofdm} signal to minimize \ac{toa} errors subject to these constraints.

% TOA estimation, positioning, and OFDM
\Ac{toa} estimation at the receiver is the first step in several positioning protocols including pseudorange multilateration, \acl{tdoa}, or \acl{rtt}, all of which have seen widespread use in existing cellular networks \cite{shamaei2018exploiting,dwivedi2021positioning}. To obtain a \ac{toa} estimate, the receiver may correlate delayed copies of the known pilot signal against the received signal, creating a correlation function. The receiver may then estimate the \ac{toa} relative to its local clock as the delay corresponding to the peak power of this correlation function. In an \ac{ofdm} signal, the presence of the cyclic prefix allows this time-delay correlation to be computed as a frequency domain correlation that is orthogonal to subcarriers not modulated with pilot resources \cite{berger2010signal}, assuming sufficiently small Doppler. As a result, the \ac{acf} of the \ac{ofdm} pilot signal depends only on the placement and power allocations of the pilot resources and not the phases of the pilot resources.

% Discuss ACF and thresholding effect
The shape of the \ac{acf} is of particular interest because it regulates how \ac{toa} estimation errors change with the \ac{snr}. The \ac{acf} is typically described by its mainlobe, which is the correlation power near the true \ac{toa}; its sidelobes, which are spikes in correlation power occurring outside of the mainlobe; and grating lobes, which are images of the mainlobe caused by aliasing within the \ac{acf}. At high \ac{snr}, estimation errors tend to be concentrated near the peak of the mainlobe. As \ac{snr} decreases, however, it becomes increasingly likely that errors occur on sidelobes, which causes a sudden and drastic increase in estimation error variance due to their distance from the mainlobe. This phenomenon is referred to as the ``thresholding effect,'' which has been widely studied in \ac{toa} estimation \cite{zeira1994realizable,nanzer2016bandpass,sahinoglu2008ultra} and motivated the use of the Barankin bound \cite{barankin1949,mcaulay1971barankin} and the \ac{zzb} \cite{Ziv1969} as alternatives to the \ac{crlb}, which does not capture this effect. Large errors may also be introduced by grating lobes. If the duration of delays over which the receiver correlates is greater than the unambiguous region of the \ac{acf}, aliasing will occur and grating lobe ambiguities may appear in the correlation function that are indistinguishable from the mainlobe. Since the grating lobes are equally-powered images of the mainlobe, these errors occur irrespective of the \ac{snr}.

% Discuss design
A pilot resource allocation designed for \ac{toa} estimation must account for all of these sources of error. Existing techniques typically transmit positioning signals that have power uniformly distributed over the spectrum, such as the \ac{prs} in LTE and 5G NR. While these power allocations create sufficiently low sidelobes in the \ac{acf}, alternative power allocations may achieve lower \ac{toa} error variances when restricted to the same power budget and signal bandwidth. The optimal allocation to minimize estimation error variance must balance the sharpness of the mainlobe against the sidelobe level and ensure that grating lobes are not present over the range of possible correlation delays. With \textit{a priori} knowledge of both the receiver's \ac{snr} and the distribution of the \ac{toa} dictating the range of correlation delays, optimal allocations could significantly reduce \ac{toa} estimation errors without increasing the power budget or signal bandwidth. These allocations may also be designed to use a limited number of subcarriers, allowing networks to keep unused resources available for communications purposes. Furthermore, the receiver may perform either coherent correlation if the carrier phase is known accurately or noncoherent correlation otherwise, causing changes to the shape of the \ac{acf} and the distribution of post-correlation noise, thereby affecting the optimal allocation.

% What this paper does
This paper optimizes the allocation of \ac{ofdm} pilots to minimize the error variance of \ac{toa} estimates obtained from the \ac{ofdm} signal, quantified through the \ac{zzb}. The optimization of the \ac{zzb} for \ac{toa} estimation has been discussed in prior work but never solved for the important class of \ac{ofdm} signals, restricted to integer resource constraints, nor generalized to both coherent and noncoherent reception.

The \ac{zzb} is first expressed for \ac{ofdm} signals under both coherent and noncoherent reception. In both the coherent and noncoherent cases, the minimization of the \ac{zzb} with respect to pilot power allocations while subject to a total power constraint is then proven to be a convex problem with readily calculated gradients, permitting efficient selection of optimal allocations. This minimization is then further constrained to only allocate power in a fixed number of subcarriers, yielding an integer-constrained problem that is lower-bounded by a convex relaxation. A branch-and-bound algorithm is then proposed, allowing approximately optimal allocations to be computed efficiently. Optimal (and approximately optimal) allocations are found over a range of \acp{snr} for both the coherent and noncoherent versions of the convex-constrained and the integer-constrained problems. These allocations are then analyzed through their \ac{toa} error bounds and \acp{acf}. Finally, to demonstrate their real-world applicability, \ac{toa} errors are measured on an \ac{sdr} ranging platform transmitting \ac{ofdm} signals generated from these allocations.

\subsection{Prior Work}

% Analysis with CRLB/ZZB
Prior work has analyzed signals for \ac{toa} estimation with theoretical bounds such as the \ac{zzb} and \ac{crlb}. The \ac{crlb} is derived to analyze the performance of \ac{toa} estimation with \ac{ofdm} signals in \cite{PeralRosado2018} and \cite{xu2016maximum}. Meanwhile, the \ac{zzb} on \ac{toa} estimation is derived under a tapped delay line channel model with statistical channel knowledge at the receiver for frequency-hopping signals \cite{liu2010ziv} and for ultrawide bandwidth signals \cite{dardari2009ziv}. In the context of localization, the authors of \cite{gusi2018ziv} derive the \ac{zzb} on the positioning error of direct position estimation and two-step estimation based on \ac{toa} measurements. The \ac{zzb} on \ac{toa} estimation is also derived in the context of compressed sensing radar in \cite{zhang2022deterministic}. While these studies provide insights into \ac{toa} estimation errors from these bounds, they do not consider the optimal design of the signals to minimize \ac{toa} error.

%\cite{koelemeij2022hybrid} ranging testbed

% Optimization of ZZB/CRLB/other for TOA
Other prior work has used the \ac{crlb}, \ac{zzb}, and other criteria to optimize signal design for \ac{toa} estimation.
%
% ZZB evaluation
The \ac{zzb} on \ac{toa} has been evaluated for sets of parametric candidate \ac{ofdm} power allocations in \cite{graff2024purposeful,laas2021ziv,dammann2016optimizing,staudinger2017optimized,Driusso2015}, showing how allocating power toward the extremities of the band increases \ac{toa} precision in the high-\ac{snr} regime at the expense of high sidelobe levels and how power can be intelligently allocated to balance this tradeoff. However, none of these papers determine optimal power allocations and are restricted to the coherent \ac{zzb}, limiting applicability in practical systems where the carrier phase is unknown.
Similarly, Wirsing et al. evaluate how different $\sfrac{\pi}{4}\text{-QPSK}$ signaling patterns affect the coherent \ac{zzb} \cite{wirsing2020designing}.

% ZZB optimization
The \ac{toa} \ac{zzb} is optimized in \cite{xiong2023snr} to create \ac{snr}-adaptive waveforms subject to a power and bandwidth constraint, proving convexity of the phase-coherent \ac{zzb} with respect to a signal's \ac{acf} and demonstrating that notable reductions in \ac{toa} error are possible through Ziv-Zakai optimization. The optimization, however, does not consider the noncoherent \ac{zzb}, restrictions to \ac{ofdm} signals, or the more complex resource constraints addressed in the current paper.
Sun et al. \cite{sun2023trade} optimize the balance of power between positioning and communications signals to maximize data rate subject to a \ac{toa} error constraint using the \ac{zzb}, but only optimize for the total signal powers and do not optimize the allocation of power across frequency.

%
% CRLB
The design of a multiband \ac{ofdm} signal is addressed in \cite{dun2021design}, where a sparse selection of bands is optimized subject to a \ac{toa} error variance constraint quantified by the \ac{crlb} with results verified on an \ac{sdr} testbed.
Multi-carrier power allocations are optimized using the \ac{crlb} to minimize \ac{toa} error  for signal's experiencing interference using the \ac{crlb} in \cite{karisan2011range} and to minimize both \ac{toa} error and channel estimation error in \cite{Larsen2011,montalban2013power}.
Li et al. optimize subcarrier power allocations in an \ac{ofdm} \ac{isac} system to minimize the \ac{crlb} on ranging \ac{toa} estimation subject to power and minimum capacity constraints \cite{li2023joint}. While the optimized signals in \cite{dun2021design,karisan2011range,Larsen2011,montalban2013power,li2023joint} may be effective in high \ac{snr} conditions, these designs are inapt at low \acp{snr} due to their use of the \ac{crlb}.
%
% Other
A weighted integrated sidelobe level optimization is conducted in \cite{jamalabdollahi2017high} to design signals with desirable autocorrelation properties for \ac{toa} estimation in frequency-selective fading channels.

Outside of \ac{toa} estimation, the \ac{zzb} has also seen use in the context of \ac{mimo} radar. The noncoherent \ac{zzb} was used in \cite{gupta2019design} as a metric to evaluate sparse array designs for \ac{aoa} estimation. The authors of \cite{chiriac2015ziv} also derived the \ac{zzb} for the joint estimation of the position and velocity of targets in a \ac{mimo} radar system.

% ISAC papers that don't use the ZZB
Beyond signal design for \ac{toa} estimation and positioning, a large body of prior work optimizes signal design for \ac{isac} \cite{liu2022integrated} and \ac{jcas} systems that focus on radar as the sensing component.
%
% OFDM
Many of these \ac{isac} studies have specifically considered the optimization of \ac{ofdm} waveforms and power allocations.
The studies in \cite{liu2017adaptive,zhang2019mutual,wei2023waveform} consider \ac{ofdm} \ac{isac} systems and optimize the transmitted signal using objective functions based on the sensing mutual information.
The authors of \cite{liyanaarachchi2020joint} optimize the symbols of an \ac{ofdm} \ac{isac} system on resources allocated for radar sensing to minimize the \ac{crlb} on target \ac{toa} and Doppler estimation.
Similarly, Wang et al. optimize subcarrier resource allocations and powers in a multicarrier \ac{isac} system to maximize capacity while enforcing a minimum \ac{sinr} constraint for radar sensing, proposing a branch-and-bound algorithm for solving the allocation problem \cite{wang2019power}.
Ni et al. optimize the precoders of an \ac{ofdm} \ac{jcas} system based on the \ac{sinr}, radar mutual information, and radar \acp{crlb} on delay, gain, and \acl{aod} \cite{ni2021multi}.
Considering only an \ac{ofdm} radar system, Sen et al. propose a multiobjective optimization to design subcarrier weights in \ac{ofdm} radar waveforms to balance both sparse estimation error and detection performance \cite{sen2010multiobjective}.
This work on the optimization of \ac{ofdm} \ac{isac} and radar systems motivates the importance of optimized \ac{ofdm} signal design. However, this work does not address the optimization of \ac{ofdm} signals to minimize the \ac{zzb} on \ac{toa} estimation.

%% Not OFDM
%\cite{dong2022sensing} optimizes the power allocation and bandwidth allocations for downlink signals for multiple users in an \ac{isac} system to minimize the \ac{crlb} on distance and \ac{aoa} estimation while meeting a communications quality of service constraint.

%% Communications only
%\ac{ofdm} power allocation has been studied in the context of communications and optimized to maximize multiuser capacity \cite{shen2003optimal}.

\subsection{Contributions}
The main contributions of this paper are as follows:
\begin{itemize}
	\item Expressions for the \ac{zzb} on \ac{toa} estimation error variance for \ac{ofdm} signals under both coherent and noncoherent reception.
	\item Proofs of the convexity of both the coherent and noncoherent \acp{zzb} with respect to subcarrier power allocations.
	\item A branch-and-bound algorithm for efficiently albeit approximately solving the \ac{zzb} minimization problem under additional integer resource constraints on pilot resource allocation.
	\item Analysis of the optimal (and approximately optimal) pilot allocations, \acp{acf}, and \ac{toa} error variances obtained from both the convex-constrained and integer-constrained problems under both coherent and noncoherent reception.
	\item Evaluation of the \ac{toa} error variance measured on an \ac{sdr} platform transmitting \ac{ofdm} signals using the optimized pilot allocations in comparison against an \ac{ofdm} signal with a uniform pilot power allocation.
\end{itemize}

The remainder of the paper is organized as follows. Section~\ref{section:sig} introduces the signal model, maximum-likelihood \ac{toa} estimators, \acp{acf}, and error bounds on \ac{toa} estimation. Section~\ref{section:optim} introduces the optimization problems. Section~\ref{sec:convex_constrained} frames the optimization under convex constraints, proves convexity, and derives expressions for gradients. Section~\ref{sec:integer_constrained} frames the optimization under integer constraints and proposes a branch-and-bound algorithm to efficiently solve for approximately-optimal allocations. Section~\ref{sec:num_results} analyzes the optimized allocations, \acp{acf}, and \acp{zzb}, comparing the results against a uniform allocation. Section~\ref{sec:exp_results} analyzes the performance of the optimized allocations on an \ac{sdr} platform. Finally, Sec.~\ref{sec:conclusion} ends the paper with conclusions drawn from the results.

\textbf{Notation:} Column vectors are denoted with lowercase bold, e.g., $\bm{x}$. Matrices are denoted with uppercase bold, e.g., $\bm{X}$. Scalars are denoted without bold, e.g., $x$. The $i$th entry of a vector $\bm{x}$ is denoted $x[i]$ or in shorthand as $x_i$. The Euclidean norm is denoted $||\bm{x}||$. Real transpose is represented by the superscript $T$ and conjugate transpose by the superscript $H$. The Q-function is denoted as $Q(\cdot)$. The Marcum-Q function of order $\nu$ is denoted as $Q_{\nu}(\cdot)$. The modified Bessel function of the first kind is denoted as $I_{n}(\cdot)$. Zero-based indexing is used throughout the paper; e.g., $x[0]$ refers to the first element of $\bm{x}$.

% Element-wise vector multiplication is represented by $\odot$. Circular convolution is represented by $\circledast$.
% , and the Frobenius norm is denoted $||\bm{X}||_{\text{F}}$
% The floor function is denoted as $\lfloor . \rfloor$.

\section{Signal Model}
\label{section:sig}

Consider an \ac{ofdm} signal with $K$ subcarriers, a subcarrier spacing of $\Delta_{\text{f}}\;\SI{}{\hertz}$, and a payload $x[k]$ for subcarrier indices $k \in \mathcal{K}$, where $\mathcal{K} = \{0,1,\ldots, K-1\}$. Let $d[k]$ be the mapping from subcarrier indices to physical distances in frequency from the carrier in units of subcarriers. This map is defined as $d[k] = k$ for $k = 0,1,\ldots\sfrac{K}{2}-1$ and $d[k] = k-K$ for $k = \sfrac{K}{2}, \sfrac{K}{2}+1,\ldots, K-1$. This signal propagates through an \ac{awgn} channel at a carrier frequency $\fc$ and experiences a gain in signal power $g$, time delay $\tau_0$, phase shift $\phi_0$, and \ac{awgn} $v[k] \sim \mathcal{CN}(0,\sigma^2)$. The baseband received signal $y[k]$ is modeled in the frequency domain as
\begin{align}
	y[k] &= \alpha[k] x[k] + v[k],\\
	\alpha[k] &= \sqrt{g}\exp\left(-j2\pi d[k] \Delta_{\text{f}} \tau_0 + j\phi_0\right).
\end{align}
This model assumes that the receiver begins sampling the transmitted \ac{ofdm} symbol during the cyclic prefix, permitting the receiver to process the signal in the frequency domain.

%% Correlation function
Assuming that the payload $x[k]$ is fully known to the receiver, the receiver can obtain a maximum-likelihood estimate of $\tau_0$ from the frequency domain correlation of $y[k]$ and $x[k]$. Let $\tau$ be the time delay at which the correlation function is evaluated. Through the remainder of this paper, this time delay will be normalized by the \ac{ofdm} sampling period $\Ts = \sfrac{1}{K \Delta_{\text{f}}}$ to define $z \triangleq \sfrac{\tau}{\Ts}$. The true delay $\tau_0$ is also scaled, defining $z_0  \triangleq \sfrac{\tau_0}{\Ts}$. The complex correlation as a function of $z$ and $\bm{x}$ is
\begin{align}
	\ACtilde &\triangleq \sum_{k \in \mathcal{K}} x^{\ast}[k] y[k]\exp\left(j2\pi z d[k] / K\right).
	\label{eq:complex_corr}
\end{align}
Let the total power of the \ac{ofdm} payload be $P = \sum_{k \in \mathcal{K}} x^{\ast}[k] x[k]$ and let the normalized received power at subcarrier $k$ be $\rho[k] = x^{\ast}[k] x[k]/P$. The total integrated \ac{snr} is then defined as $\gamma \triangleq gP/\sigma^2$. When normalized by the channel gain and total power, the complex correlation in (\ref{eq:complex_corr}) becomes
\begin{align}
	\frac{1}{\sqrt{g}P}\ACtilde = \ACf + \va,
\end{align}
where $\ACf$ is the normalized complex \ac{acf} of the received signal without noise, and $\va$ is the noise component. These are defined as
\begin{align}
	\ACf &\triangleq \sum_{k \in \mathcal{K}} \rho[k] \exp\left(j2\pi (z - z_0)d[k]/K + j\phi_0\right),\\
	\va &\triangleq \frac{1}{\sqrt{g}P}\sum_{k \in \mathcal{K}} x^{\ast}[k]v[k] \exp\left(j2\pi z d[k]/K\right).
\end{align}
It follows that $\va \sim \mathcal{CN}\left(0,\gamma^{-1}\right)$.

First consider the coherent reception case where the receiver knows $\phi_0$ exactly. In this case, the maximum-likelihood estimator becomes \cite{rife1974single}
\begin{align}
	\hat{z} &= \arg\max_{z}\; \Re\{\exp\left(-j\phi_0\right)\ACtilde\}\nonumber\\
	&= \arg\max_{z}\; \ACc + \Re\left\{\exp\left(-j\phi_0\right)\va\right\},
	\label{eq:mle_coh}
\end{align}
where $\ACc \triangleq \Re\left\{\exp\left(-j\phi_0\right)\ACf\right\}$ is the normalized coherent \ac{acf} of the received signal without noise, simplifying to \cite{laas2021ziv}
\begin{align}
	\ACc = \sum_{k \in \mathcal{K}} \rho[k] \cos(2\pi z d[k] / K).
	\label{eq:ac_coh}
\end{align}

Next consider the noncoherent reception case where $\phi_0$ is unknown. In this case, the maximum-likelihood estimator becomes \cite{rife1974single}
\begin{align}
	\hat{z} &= \arg\max_{z}\; \left|\ACtilde\right|^2\nonumber\\
	&= \arg\max_{z}\; \ACn + 2\Re\left\{\ACfAst\va\right\} + \left|\va\right|^2
	\label{eq:mle_incoh}
\end{align}
where $\ACn \triangleq |\ACf|^2$ is the normalized noncoherent \ac{acf} of the received signal without noise:
\begin{align}
	\ACn = &\left(\sum_{k \in \mathcal{K}}\rho[k] \cos{(2 \pi z d[k] / K)}\right)^2 \nonumber\\
	+ &\left(\sum_{k \in \mathcal{K}} \rho[k] \sin{(2 \pi z d[k] / K)}\right)^2.
	\label{eq:ac_incoh}
\end{align}

Both cases are of interest for radio positioning systems. The coherent detector minimizes the likelihood of detecting nearby sidelobes but may only be practical when the receiver's signal tracking has already obtained highly-accurate carrier phase measurements. Meanwhile, the noncoherent detector does not require \textit{a priori} knowledge of the carrier phase, making it more applicable when accurate carrier phase measurements are not available. Much of the prior work on analyzing and optimizing the \ac{zzb} for \ac{toa} estimation has neglected to distinguish these two cases.

\subsection{Estimation Error Bounds}

%% Cramer-Rao bound
Letting $\hat{\tau}$ be an unbiased estimate of $\tau_0$, the \ac{crlb} on \ac{toa} estimation error variance for \ac{ofdm} signals is given by \cite{xu2016maximum}
\begin{align}
	\mathbb{E}\left[(\hat{\tau} - \tau_0)^2\right] &\geq \sigma_{\text{CRLB}}^2(\bm{\rho})\nonumber\\
	&\triangleq \left( 8\pi^2 \gamma \Delta_{\text{f}}^2 \sum_{k \in \mathcal{K}} d^{2}[k] \rho[k] \right)^{-1}.
\end{align}
%
%While the \ac{crlb} provides valuable insights into estimator error variance, its applicability as an optimization criteria for the power allocations $\bm{\rho}$ is limited since the bound only captures estimator errors near the peak of the signal's \ac{acf} and ignores the effect of sidelobes \cite{graff2024purposeful}.
The \ac{crlb} provides a valuable insight, namely, that estimator error variance scales inversely with the square of the distance (in frequency) of powered subcarriers from the carrier at the center of the band. Thus, a \ac{crlb}-optimal strategy would allocate all power to the extremes: $\rho[k] = 1/2$ for $k \in {K/2 - 1, K/2}$, otherwise $\rho[k] = 0$.  However, the \ac{crlb} is only applicable at high \ac{snr} because it ignores the thresholding effects caused by sidelobes at lower \ac{snr} \cite{graff2024purposeful}.

%% Ziv-Zakai bound

To resolve the shortcomings of the \ac{crlb}, the \ac{zzb} is computed to characterize the \ac{toa} error variance for a given \ac{ofdm} pilot power allocation. This is preferred over the \ac{crlb} because it accounts for thresholding effects caused by sidelobes in the signal's \ac{acf}. The \ac{zzb} considers the error probability of a binary detection problem with equally-likely hypotheses: (1) the received signal experienced delay $\tau_0$, and (2) the received signal experienced delay $\tau_0 + \tau$. Assuming this error probability is shift-invariant, the hypotheses can be simplified without loss of generality by assuming $\tau_0 = 0$. The minimum probability of error for this problem is denoted $\Pmin$.

Assuming \textit{a priori} knowledge that the \ac{toa} is uniformly distributed in $[0,\Ta]$, the \ac{zzb} can be defined as \cite{Ziv1969,dardari2009ziv}
\begin{align}
	\mathbb{E}\left[(\hat{\tau} - \tau_0)^2\right]  &\geq \ZZBRho \nonumber\\
	&\triangleq \frac{1}{\Ta}\int_{0}^{\Ta} \tau (\Ta - \tau) P_{\text{min}}(\tau/T_s,\bm{\rho}) d\tau \nonumber\\
	&= \frac{\Ts^2}{\Na}\int_{0}^{\Na} z (\Na - z) \Pmin dz,
	\label{eq:zzb}
\end{align}
where $\Na \triangleq \sfrac{\Ta}{\Ts}$. Expressing the \ac{zzb} in this form with normalized time delays is useful since no terms in the integrand depend on the subcarrier spacing $\Delta_{\text{f}}$, allowing the bound to be easily scaled to account for varying $\Delta_{\text{f}}$.

As in the previous section, the receiver may perform either coherent or noncoherent detection, each having different error probabilities. The minimum probability of error for the coherent detector is \cite{proakis2008digital,dardari2009ziv}
% Coherent Pmin
\begin{align}
	\PminC &= Q\left(\sqrt{\gamma\left(1 - \ACc\right)}\right)\nonumber\\
	&= \frac{1}{2} - \frac{1}{2} \text{erf}\left(\sqrt{ \frac{\gamma}{2} \left( 1 - \ACc \right) } \right),
	\label{eq:Pmin_coh}
\end{align}
and the resulting coherent \ac{zzb} is defined as
\begin{align}
	\ZZBcRho \triangleq \frac{\Ts^2}{\Na}\int_{0}^{\Na} z (\Na - z) \PminC dz.
	\label{eq:zzb_coh}
\end{align}
Meanwhile, the minimum probability of error for the noncoherent detector is \cite{proakis2008digital,helstrom1955resolution}
% noncoherent Pmin
\begin{align}
	\PminN &= Q_{1}(a,b) - \frac{1}{2}\exp{\left(\sfrac{-(a^2 + b^2)}{2}\right)}I_0(ab),\label{eq:Pmin_incoh}\\
	a &\triangleq \sqrt{\frac{\gamma}{2}\left(1-\sqrt{1-\ACn}\right)},\label{eq:a}\\
	b &\triangleq \sqrt{\frac{\gamma}{2}\left(1+\sqrt{1-\ACn}\right)}, \label{eq:b}
\end{align}
and the resulting noncoherent \ac{zzb} is defined as
\begin{align}
	\ZZBnRho \triangleq \frac{\Ts^2}{\Na}\int_{0}^{\Na} z (\Na - z) \PminN dz.
	\label{eq:zzb_incoh}
\end{align}

It is important to note that neither the coherent nor noncoherent expressions for error probability depends on the phase of $x[k]$, only on the power $\rho[k]$ allocated to each subcarrier. The next section explores the optimization of the power allocations $\rho[k]$ to minimize the \ac{zzb} in both schemes.

\section{Optimization}
\label{section:optim}

% TODO
% Add introductory text here...

\subsection{Convex-Constrained Problem}
\label{sec:convex_constrained}
%% Optimization problem
\ac{zzb}-optimal \ac{ofdm} power allocations can be found by minimizing the \ac{zzb} subject to a fixed power budget constraint. This optimization problem is written as
\begin{align}
	\min_{\bm{\rho}} \quad & \ZZBRho \nonumber\\
	\textrm{s.t.} \quad & \rho[k] \geq 0, \quad k \in \mathcal{K}\nonumber\\
	& \bm{1}^T\bm{\rho} - 1 = 0
	\label{eq:zzb_optim}
\end{align}
where $\ZZBRho$ is substituted with $\ZZBcRho$ for the coherent case or $\ZZBnRho$ for the noncoherent case. The affine equality constraint can be removed by expressing $\bm{\rho}$ as an affine function of a $K-1$ dimensional vector $\tilde{\bm{\rho}}$. Let $\tilde{\rho}[k] = \rho[k]$ for $k \in \tilde{\mathcal{K}}$, where $\tilde{\mathcal{K}} \triangleq \mathcal{K}\setminus\{0\}$ is the set of subcarrier indices excluding the DC subcarrier. Then define $\bm{F} \in \mathbb{R}^{K\times{}K-1}$ as $\bm{F} \triangleq \left[-\bm{1},\;\bm{I}\right]^{T}$ and define $\bm{e}_{0} \in \mathbb{R}^{K}$ as a one-hot vector with a $1$ at index $0$ and $0$ elsewhere. Then, under the affine equality constraint, $\bm{\rho}$ can be parameterized as $\bm{\rho} = f(\tilde{\bm{\rho}}) \triangleq \bm{F}\tilde{\bm{\rho}} + \bm{e}_{0}$. The optimization problem in (\ref{eq:zzb_optim}) can then be reframed as
\begin{align}
	\min_{\tilde{\bm{\rho}}} \quad & \ZZBRhoR \nonumber\\
	\textrm{s.t.} \quad & \tilde{\rho}[k] \geq 0,\quad k \in \tilde{\mathcal{K}}\nonumber\\
	&\bm{1}^T\tilde{\bm{\rho}} - 1\leq 0
	\label{eq:zzb_optim_reframe}
\end{align}
where the last inequality constraint arises from the inequality constraint on $\rho[0]$ in (\ref{eq:zzb_optim}). Parameterizing the problem in this manner ensures that the gradients can be computed. Otherwise, the integrals for computing the gradients may diverge. The following section will prove that the optimization problem (\ref{eq:zzb_optim}) is convex for both the coherent and noncoherent error probabilities. The following proof of convexity for the coherent case in Theorem~\ref{thm:convexity_coh} is different from the proof in \cite{xiong2023snr} and demonstrates convexity with respect to \ac{ofdm} subcarrier power allocations.

%% Coherent Case

\begin{theorem}
	The coherent \ac{zzb} $\ZZBcRho$ is convex with respect to the \ac{ofdm} power allocations $\bm{\rho}$ on the domain $\left\{\bm{\rho} \;|\; \bm{1}^{T}\bm{\rho} \leq 1, \rho[k] \geq 0\right\}$.
	\label{thm:convexity_coh}
\end{theorem}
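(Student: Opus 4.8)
The plan is to exploit the structure of the coherent \ac{zzb} as a nonnegative-weighted integral of minimum error probabilities, reducing the claim to the convexity of the integrand $\PminC$ in $\bm{\rho}$ at each fixed $z$, and then to establish that pointwise convexity by a scalar composition argument. The starting observation is that the coherent \ac{acf} $\ACc$ in (\ref{eq:ac_coh}) is, for each fixed $z$, a linear function of $\bm{\rho}$, being the fixed linear combination $\sum_{k\in\mathcal{K}}\rho[k]\cos(2\pi z d[k]/K)$. Consequently the quantity $u \triangleq 1 - \ACc$ is affine in $\bm{\rho}$, and the domain $\{\bm{\rho} \mid \bm{1}^{T}\bm{\rho}\leq 1,\ \rho[k]\geq 0\}$ is convex.

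First I would verify that the square-root argument is well defined on the feasible set: since $\rho[k]\geq 0$ and $\bm{1}^{T}\bm{\rho}\leq 1$, the triangle inequality gives $|\ACc|\leq\sum_{k\in\mathcal{K}}\rho[k]\leq 1$, so $u = 1-\ACc\in[0,2]$ throughout. The minimum error probability (\ref{eq:Pmin_coh}) can therefore be written as the composition $\PminC = g(u)$ with $g(u)\triangleq Q\!\left(\sqrt{\gamma u}\right)$ on $u\geq 0$.

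The key step is to show that $g$ is convex on $[0,\infty)$. Using $Q'(x)=-\tfrac{1}{\sqrt{2\pi}}e^{-x^2/2}$ and the chain rule, differentiating twice yields, for $u>0$,
\begin{align}
	g''(u) = \frac{\sqrt{\gamma}}{4\sqrt{2\pi}}\,u^{-3/2}\,(1+\gamma u)\,e^{-\gamma u/2} > 0,\nonumber
\end{align}
which is strictly positive. Combined with the continuity of $g$ at $u=0$, this establishes convexity of $g$ on the closed half-line. Because the composition of a convex scalar function with an affine map is convex, $\PminC = g(1-\ACc)$ is convex in $\bm{\rho}$ for every fixed $z$.

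Finally I would invoke preservation of convexity under nonnegative-weighted integration: the weight $z(\Na-z)$ is nonnegative on $[0,\Na]$, so
\begin{align}
	\ZZBcRho = \frac{\Ts^2}{\Na}\int_{0}^{\Na} z(\Na - z)\,\PminC\, dz\nonumber
\end{align}
is a convex function of $\bm{\rho}$ on the stated (convex) domain. I expect the main obstacle to be the second-derivative computation confirming $g''>0$ together with the mild singularity of $g'$ at $u=0$, which must be handled via continuity rather than by direct evaluation; the affine-composition and integral-preservation steps are then routine.
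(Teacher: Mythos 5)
Your proof is correct, but it takes a genuinely different route from the paper's. The paper performs the same outer reductions you do---it sets $\lambda \triangleq \ACc$, uses that the coherent ACF is affine in $\bm{\rho}$, and invokes preservation of convexity under the nonnegative-weighted integral in (\ref{eq:zzb_coh})---but for the key scalar step it never differentiates: it establishes convexity of $\lambda \mapsto Q\left(\sqrt{\gamma(1-\lambda)}\right)$ via Craig's formula, writing $\PminC = \lim_{\epsilon\to0}\frac{1}{\pi}\int_{\epsilon}^{\pi/2} C_{1}(\theta)\exp\left(C_{2}(\theta)\lambda\right)d\theta$ with $C_{1}(\theta), C_{2}(\theta) \geq 0$, so each integrand is manifestly convex in $\lambda$ and the nonnegative-weighted integral (and pointwise limit) preserves convexity. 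That integral representation sidesteps the square-root singularity at $\lambda = 1$ (your $u = 0$) entirely, so no boundary-continuity argument is needed, at the cost of invoking Craig's formula and a limiting argument. Your direct calculus route is more elementary and self-contained, and yields the explicit strict positivity $g''(u) = \frac{\sqrt{\gamma}}{4\sqrt{2\pi}}u^{-3/2}(1+\gamma u)e^{-\gamma u/2} > 0$ (hence strict convexity in $u$ away from the boundary), but it requires exactly the care you flag: $g'$ diverges as $u \to 0^{+}$, so convexity on $[0,\infty)$ must be recovered from convexity on the open half-line plus continuity at the endpoint, which you handle correctly; note also that monotonicity of $g$ is irrelevant since the inner map is affine. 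One small point in your favor: you verify $|\ACc| \leq 1$, giving $u \in [0,2]$, and prove convexity of $g$ on all of $[0,\infty)$, whereas the paper's passing remark that $0 \leq \lambda \leq 1$ is not strictly accurate (the coherent ACF can be negative); this is harmless there, since Craig's formula and the convexity of $\exp(C_{2}(\theta)\lambda)$ only require $\lambda \leq 1$, so both proofs stand.
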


\begin{proof}
For ease of expression, begin by defining
\begin{align}
	\lambda &\triangleq \ACc,
\end{align}
resulting in
\begin{align}
	\PminC = Q\left(\sqrt{\gamma\left( 1 - \lambda \right)}\right).
\end{align}
Note that $0 \leq \lambda \leq 1$. By Craig's formula \cite{craig1991new}, which is valid for $\sqrt{\gamma(1-\lambda)} \geq 0$ or equivalently $\lambda \leq 1$, $\PminC$ is rewritten as
\begin{align}
	&\PminC = \frac{1}{\pi} \int_{0}^{\sfrac{\pi}{2}} \exp\left(\frac{-\gamma(1-\lambda)}{2\sin^2(\theta)}\right) d\theta \nonumber\\
	&\quad= \lim_{\epsilon\to0} \frac{1}{\pi} \int_{\epsilon}^{\sfrac{\pi}{2}} \exp\left(\frac{-\gamma}{2\sin^2(\theta)}\right) \exp\left(\frac{\gamma\lambda}{2\sin^2(\theta)}\right) d\theta \nonumber\\
	&\quad= \lim_{\epsilon\to0} \frac{1}{\pi} \int_{\epsilon}^{\sfrac{\pi}{2}} C_{1}(\theta) \exp\left(C_{2}(\theta)\lambda\right) d\theta.
\end{align}
For any $\theta$, it follows that $C_{1}(\theta) \geq 0$, $C_{2}(\theta) \geq 0$, and $\exp\left(C_{2}(\theta)\lambda\right)$ is convex with respect to $\lambda$. Since non-negative weighted integrals preserve convexity \cite{boyd2004convex}, $\PminC$ is convex with respect to $\lambda$. Recognizing that $\ACc$ is an affine mapping of $\bm{\rho}$ and that composition with an affine map preserves convexity \cite{boyd2004convex}, it follows that $\PminC$ is convex with respect to $\bm{\rho}$. Finally, the integral in (\ref{eq:zzb_coh}) preserves convexity. Therefore $\ZZBcRho$, the \ac{zzb} defined in (\ref{eq:zzb_coh}), is convex with respect to $\bm{\rho}$ on this domain.
\end{proof}

%\begin{proof}
%For ease of expression, begin by defining
%\begin{align}
%	\lambda &\triangleq \ACc, \\
%	g(\lambda) &\triangleq -\text{erf}\left(\sqrt{\frac{\gamma}{2}\left( 1 - \lambda \right)}\right).
%\end{align}
%Then $\PminC$ can be expressed as $\PminC = \frac{1}{2} + \frac{1}{2} g(\lambda)$. The 2nd derivative of $g(\lambda)$ is given by
%\begin{align}
%	g^{\prime\prime}(\lambda) &= \frac{\gamma^2 \exp{\left(\frac{-\gamma}{2}(1-\lambda)\right)}}{2\sqrt{2\pi \gamma(1-\lambda)}} + \frac{\gamma^2 \exp{\left(\frac{-\gamma}{2}(1-\lambda)\right)}}{2\sqrt{2\pi}(\gamma(1-\lambda))^{3/2}}.
%\end{align}
%%\begin{align}
%%	g^{\prime\prime}(\lambda) &= \frac{\gamma^2 \exp{\left(-\gamma(1-\lambda)\right)}}{\sqrt{\pi \gamma(1-\lambda)}} + \frac{\gamma^2 \exp{\left(-\gamma(1-\lambda)\right)}}{2\sqrt{\pi}(\gamma(1-\lambda))^{3/2}}.
%%\end{align}
%% TODO
%% Discuss improper integral here
%Notice that $g^{\prime\prime}(\lambda) > 0$ for $\lambda < 1$, therefore $g(\lambda)$ is convex with respect to $\lambda$. Note that $\lambda$ is an affine mapping of $\rho[k]$, and by Boyd 3.2.2, composition of a convex function with an affine mapping preserves convexity. Therefore $\PminC$ is convex with respect to $\rho[k]$. By Boyd 3.2.1, integration preserves convexity. Therefore $\ZZBcRho$, the \ac{zzb} defined in (\ref{eq:zzb_coh}), is convex with respect to $\rho[k]$.
%\end{proof}

This convexity is preserved under the affine mapping used to remove the equality constraint in (\ref{eq:zzb_optim}), ensuring that (\ref{eq:zzb_optim_reframe}) is also a convex problem. This convexity allows (\ref{eq:zzb_optim_reframe}) to be efficiently solved using existing solvers. To aid the implementation of first and second order methods with these solvers, the gradient and Hessian of the \ac{zzb} with respect to the power allocations $\tilde{\bm{\rho}}$ are provided. When expressing the partial derivatives, the shorthand notation $\tilde{\rho}_{n}$ is substituted in place of $\tilde{\rho}[n]$.

The gradient of $\PminCR$ with respect to $\tilde{\bm{\rho}}$ is constructed of entries
\begin{align}
		\frac{\partial}{\partial \tilde{\rho}_n} \PminCR &= \frac{\sqrt{\gamma}}{2\sqrt{2\pi}} \exp \left( \frac{-\gamma}{2} \left( 1 - \ACcR \right) \right) \nonumber\\
		&\times \frac{\cos(2\pi z d[n{+}1]/K) - 1}{\sqrt{1 - \ACcR}}.
		\label{eq:grad_pmin_coh}
\end{align}
Similarly, the Hessian matrix of $\PminCR$ with respect to $\tilde{\bm{\rho}}$ is constructed of entries
\begin{align}
	&\frac{\partial^2}{\partial \tilde{\rho}_n \partial \tilde{\rho}_m}  \PminCR \nonumber\\
	&= \frac{\sqrt{\gamma}}{4\sqrt{2\pi}} \exp \left( \frac{-\gamma}{2} \left( 1 - \ACcR \right) \right) \nonumber\\
	&\times \frac{(\cos{(2\pi z d[n{+}1] / K)}-1)(\cos{(2\pi z d[m{+}1] / K)}-1)}{\sqrt{1 - \ACc}} \nonumber\\
	&\times \left(\gamma + \left(1 - \ACc\right)^{-1} \right).
	\label{eq:hess_pmin_coh}
\end{align}
The expressions for the gradient in (\ref{eq:grad_pmin_coh}) and Hessian in (\ref{eq:hess_pmin_coh}) are derived in Appendix~\ref{sec:app_coh_grad_hess}. Using (\ref{eq:grad_pmin_coh}), the gradient of the ZZB with respect to $\tilde{\bm{\rho}}$ can then be constructed with entries
\begin{align}
	&\frac{\partial}{\partial \tilde{\rho}_n} \ZZBcRhoR \nonumber\\
	&= \frac{\Ts^2}{\Na}\int_{0}^{\Na} z (\Na - z) \left(\frac{\partial}{\partial \tilde{\rho}_n} \PminCR\right) dz.
	\label{eq:grad_zzb_coh}
\end{align}
Similarly using (\ref{eq:hess_pmin_coh}), the Hessian matrix of the ZZB with respect to $\tilde{\bm{\rho}}$ can be constructed with entries
\begin{align}
	&\frac{\partial^2}{\partial \tilde{\rho}_n \partial \tilde{\rho}_m} \ZZBcRhoR \nonumber\\
	&= \frac{\Ts^2}{\Na}\int_{0}^{\Na} z (\Na - z) \left(\frac{\partial^2}{\partial \tilde{\rho}_n \partial \tilde{\rho}_m} \PminCR\right) dz.
	\label{eq:hess_zzb_coh}
\end{align}

%% Noncoherent Case

\begin{theorem}
	The noncoherent \ac{zzb} $\ZZBnRho$ is convex with respect to the \ac{ofdm} power allocations $\bm{\rho}$ on the domain $\left\{\bm{\rho} \;|\; \bm{1}^{T}\bm{\rho} \leq 1, \rho[k] \geq 0\right\}$.
	\label{thm:convexity_incoh}
\end{theorem}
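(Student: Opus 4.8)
The plan is to follow the same two-level structure as the coherent proof. Since $\ZZBnRho$ in \eqref{eq:zzb_incoh} is the integral of $\PminN$ against the nonnegative weight $z(\Na-z)$ over $z\in[0,\Na]$, and nonnegative-weighted integration preserves convexity \cite{boyd2004convex}, it suffices to prove that for each fixed $z$ the noncoherent error probability $\PminN$ in \eqref{eq:Pmin_incoh} is convex in $\bm{\rho}$ on the stated domain. The entire argument therefore reduces to a pointwise-in-$z$ convexity statement.

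The obstacle that distinguishes this case from Theorem~\ref{thm:convexity_coh} is that $\ACn=|\ACf|^2$ is not affine in $\bm{\rho}$, so the affine-composition shortcut no longer applies. Writing $\psi_k\triangleq 2\pi z\,d[k]/K$, equation \eqref{eq:ac_incoh} shows that the vector $\big(\sum_k\rho_k\cos\psi_k,\ \sum_k\rho_k\sin\psi_k\big)$ is linear in $\bm{\rho}$, so its Euclidean norm $s\triangleq|\ACf|=\sqrt{\ACn}$ is convex in $\bm{\rho}$, as is $\ACn=s^2$. From \eqref{eq:a}--\eqref{eq:b} one has $a^2+b^2=\gamma$ and $ab=\tfrac{\gamma}{2}s$, with $b\ge a\ge 0$ because $b^2-a^2=\gamma\sqrt{1-\ACn}\ge 0$ (note $0\le s\le 1$ since $\bm{1}^{T}\bm{\rho}=1$ and $\bm{\rho}\ge 0$). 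Substituting these into \eqref{eq:Pmin_incoh} expresses $\PminN$ as a function of the single convex quantity $s$.

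The key building block, which is the noncoherent analogue of Craig's formula, is that $I_0(cs)$ is convex in $\bm{\rho}$ for every $c\ge 0$. Indeed, by phase-invariance of the full-period integral, $I_0(cs)=\tfrac{1}{2\pi}\int_{0}^{2\pi}\exp\!\big(c\,\Re\{\ACf e^{-j\phi}\}\big)\,d\phi$, and for each fixed $\phi$ the exponent $\Re\{\ACf e^{-j\phi}\}$ is a real linear combination of the $\rho_k$, hence affine in $\bm{\rho}$; the integrand is thus convex in $\bm{\rho}$ and the nonnegative average preserves convexity. The difficulty is that the subtracted term in \eqref{eq:Pmin_incoh} is $-\tfrac12 e^{-\gamma/2}I_0(\tfrac{\gamma}{2}s)$, which is therefore \emph{concave} in $\bm{\rho}$; convexity of $\PminN$ hinges entirely on the Marcum term $Q_1(a,b)$ overcompensating this concavity.

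I would close the argument by one of two routes. The direct route is to show that $\PminN$, viewed as a scalar function of $s\in[0,1]$, is convex and nondecreasing (nondecreasing because a larger correlation magnitude makes the two delay hypotheses harder to separate), and then invoke the composition rule that a convex nondecreasing function of the convex quantity $s$ is convex in $\bm{\rho}$ \cite{boyd2004convex}; this requires differentiating twice in $s$ using the Marcum-$Q$ identities $\partial_b Q_1(a,b)=-b\,e^{-(a^2+b^2)/2}I_0(ab)$ and $\partial_a Q_1(a,b)=b\,e^{-(a^2+b^2)/2}I_1(ab)$ together with $I_0'=I_1$, and then checking the sign of the resulting Bessel-function expression for $s\in[0,1]$ at fixed $\gamma$. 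The cleaner route is to expand $\PminN=\sum_{n\ge 0}c_n\,\ACn^{\,n}$ and show $c_n\ge 0$: since $\ACn$ is convex and nonnegative and $t\mapsto t^{n}$ is convex and nondecreasing on $[0,\infty)$, each term $\ACn^{\,n}$ is convex, so a nonnegative series is convex. Either way, establishing this sign condition, namely that the positive $Q_1$ contributions dominate the negative $I_0$ term order by order, is the main obstacle; once it is in hand, convexity of $\PminN$ in $\bm{\rho}$ follows, and convexity of $\ZZBnRho$ follows immediately from the integral-preservation argument of the first paragraph.
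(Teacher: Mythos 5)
Your scaffolding is correct and matches the paper's: you reduce $\ZZBnRho$ to pointwise-in-$z$ convexity of $\PminN$ via nonnegative-weighted integration, you correctly observe that $\ACn$ is convex but not affine in $\bm{\rho}$ (squared norm of a linear map), and you correctly see that the composition rule therefore demands both convexity \emph{and monotonicity} of $\PminN$ as a scalar function of the correlation quantity. Your identity $a^2+b^2=\gamma$ is right, and your integral-representation argument that $I_0(cs)=\frac{1}{2\pi}\int_0^{2\pi}\exp\left(c\,\Re\{\ACf e^{-j\phi}\}\right)d\phi$ is convex in $\bm{\rho}$ is sound. But the proof stops exactly at the decisive step: you name the sign condition --- that $Q_1(a,b)$ must overcompensate the concave term $-\tfrac{1}{2}e^{-\gamma/2}I_0(\tfrac{\gamma}{2}s)$ --- as ``the main obstacle'' and leave both proposed routes unexecuted. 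That sign condition \emph{is} the theorem; as written, this is a plan with an acknowledged hole, not a proof.

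The paper closes the hole with one idea you do not supply: the Neumann series expansion $Q_1(a,b)=e^{-(a^2+b^2)/2}\sum_{n\geq 0}(a/b)^n I_n(ab)$. Its $n=0$ term equals $e^{-\gamma/2}I_0(\tfrac{\gamma}{2}\sqrt{\lambda})$ and merges with the subtracted Bessel term in $\PminN$ to leave $+\tfrac{1}{2}e^{-\gamma/2}I_0(\tfrac{\gamma}{2}\sqrt{\lambda})$, so the problematic negative term cancels identically rather than needing to be dominated. What remains, with $\lambda\triangleq\ACn$, is the manifestly nonnegative series $\tfrac{1}{2}e^{-\gamma/2}I_0(\tfrac{\gamma}{2}\sqrt{\lambda})+e^{-\gamma/2}\sum_{n\geq 1}\left(\sqrt{\lambda}/(1+\sqrt{1-\lambda})\right)^n I_n(\tfrac{\gamma}{2}\sqrt{\lambda})$; expanding each $I_n$ in its power series, the generic term is a positive constant times $(1+\sqrt{1-\lambda})^{-n}\lambda^{l+n}$, and since each factor is nonnegative, nondecreasing, and convex on $[0,1]$, each product is convex and nondecreasing there --- which simultaneously delivers the monotonicity you only justified heuristically. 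Note this is a product-of-convex-monotone-functions argument, not your hypothesized expansion $\PminN=\sum_n c_n\,\ACn^{\,n}$ in pure powers: the $\sqrt{\lambda}$ and $\sqrt{1-\lambda}$ factors obstruct a naive power series, and you give no mechanism to compute or sign the $c_n$. Your route (1), brute-force second differentiation in $s$ with the Marcum-$Q$ derivative identities, runs into exactly the messy Bessel sign analysis that the series rearrangement is designed to avoid.
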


\begin{proof}
For ease of expression, begin by defining
\begin{align}
	\lambda &\triangleq \ACn,
\end{align}
and note the following equivalences for $a$ and $b$ defined in (\ref{eq:a}) and (\ref{eq:b})\cite{helstrom1955resolution}:
\begin{align}
	\frac{a}{b} = \frac{\sqrt{\lambda}}{1+\sqrt{1-\lambda}}, \quad
	ab = \frac{\gamma}{2}\sqrt{\lambda}, \quad a^2 + b^2 = \frac{\gamma}{2}.
\end{align}

Then express the Marcum-Q function using its Neumann series expansion \cite{annamalai2008simple} as
\begin{align}
	Q_{1}(a,b) = \exp{\left(\sfrac{-(a^2 + b^2)}{2}\right)} \sum_{n=0}^{\infty} \left(\frac{a}{b}\right)^{n} I_{-n}(ab).
	\label{eq:marcum_Q}
\end{align}
Recalling that $I_{-n}(\cdot) = I_{n}(\cdot)$ for integer values of $n$, this expansion can be substituted into $\PminN$, resulting in
\begin{align}
	\PminN &= \frac{1}{2}\exp{\left(\sfrac{-\gamma}{2}\right)}\overbrace{I_{0}\left(\sfrac{\gamma}{2}\sqrt{\lambda}\right)}^{A} \nonumber\\
	 &+\exp{\left(\sfrac{-\gamma}{2}\right)} \sum_{n=1}^{\infty} \underbrace{\left(\frac{\sqrt{\lambda}}{1+\sqrt{1-\lambda}}\right)^{n} I_{n}\left(\sfrac{\gamma}{2}\sqrt{\lambda}\right)}_{B}.
\end{align}
Note that the $n=0$ term of the sum in (\ref{eq:marcum_Q}) was combined with the original modified Bessel function term in (\ref{eq:Pmin_incoh}). Since the exponential terms are positive, $\PminN$ is a positive weighted sum of $A$ and $B$ and is convex if both $A$ and $B$ are convex. Now consider the series definition of the modified Bessel function of the first kind:
\begin{align}
	I_{n}(z) = &\left(\frac{1}{2}z\right)^{n} \sum_{l=0}^{\infty} \frac{\left(\frac{1}{4}z^2\right)^l}{l!\Gamma(n+l+1)} \nonumber\\
	= &\sum_{l=0}^{\infty} C_{3}(n,l) z^{2l+n}.
\end{align}
Then $A$ can be expressed as
\begin{align}
	A = I_{0}\left(\sfrac{\gamma}{2}\sqrt{\lambda}\right) = \sum_{l=0}^{\infty} C_{3}(0,l) \lambda^{l}.
\end{align}
Since $C_{3}(0,l)$ is positive and $\lambda^{l}$ is a convex function of $\lambda$ for $0 \leq \lambda \leq 1$, it follows that $A$ is convex with respect to $\lambda$ for $0 \leq \lambda \leq 1$. Now $B$ can be expressed as
\begin{align}
	B = &\sum_{l=0}^{\infty} C_{3}(n,l) \left(\frac{\sqrt{\lambda}}{1+\sqrt{1-\lambda}}\right)^{n} \left(\frac{\gamma}{2}\right)^{2l+n} \lambda^{l+n/2} \nonumber\\
	%= &\sum_{k=0}^{\infty} C_{2}(n,k) \left(\frac{1-\sqrt{1-\lambda}}{\sqrt{\lambda}} \sqrt{\lambda}\right)^{n} \lambda^{k} \\
	= &\sum_{l=0}^{\infty} C_{4}(n,l) \left(1+\sqrt{1-\lambda}\right)^{-n} \lambda^{l+n}.
\end{align}

Both $\left(1+ \sqrt{1-\lambda}\right)^{-n}$ and $\lambda^{l+n}$ are convex, nondecreasing, nonegative, and bounded for all $0 \leq \lambda \leq 1$, $n \geq 1$, and $l \geq 0$. Since $C_{4}(n,l)$ is positive, it follows that $B$ is convex for $0 \leq \lambda \leq 1$. Since both $A$ and $B$ are convex for $0 \leq \lambda \leq 1$, $\PminN$ is convex with respect to $\lambda$ on the domain $0 \leq \lambda \leq 1$. It also follows that $\PminN$ is nondecreasing with respect to $\lambda$ since both $A$ and $B$ are nondecreasing on this domain.

$\PminN$ will now be shown to be convex with respect to $\rho[k]$ when $\ACn$ is substituted for $\lambda$ and the power allocations are constrained to the convex domain $\left\{\bm{\rho} \;| \;\bm{1}^{T}\bm{\rho} \leq 1, \rho[k] \geq 0\right\}$. On this domain, it follows that $0 \leq \ACn \leq 1$, which is a convex set equivalent to the domain on which $\PminN$ was proven to be convex and nondecreasing with respect to $\ACn$. Additionally, $\ACn$ is a convex function of $\bm{\rho}$. Therefore, by composition, $\PminN$ is convex with respect to $\bm{\rho}$ on the domain $\left\{\bm{\rho} \;| \;\bm{1}^{T}\bm{\rho} \leq 1, \rho[k] \geq 0\right\}$. Finally, the integral in (\ref{eq:zzb_incoh}) preserves convexity. Therefore $\ZZBnRho$, the \ac{zzb} defined in (\ref{eq:zzb_incoh}), is convex with respect to $\bm{\rho}$ on this domain.
\end{proof}

As in the coherent case, this convexity is preserved under the affine mapping used to remove the equality constraint in (\ref{eq:zzb_optim}), ensuring that (\ref{eq:zzb_optim_reframe}) is a convex problem. Gradients and Hessians may also be computed when solving the noncoherent form of (\ref{eq:zzb_optim}). Due to its complexity, the Hessian for the noncoherent case is omitted in this paper, and first-order or quasi-Newton methods are recommended for solving the noncoherent optimization. The gradient of $\PminNR$ with respect to the power allocations $\tilde{\bm{\rho}}$ is constructed of entries
\begin{align}
	&\frac{\partial}{\partial \tilde{\rho}_n} \PminNR \nonumber\\
	&= \left(\frac{\partial}{\partial \tilde{\rho}_n} Q_{1}(a,b)\right)
	+ \left(\frac{\partial}{\partial \tilde{\rho}_n} \frac{-1}{2}\exp{\left(\sfrac{-\gamma}{2}\right)}I_{0}(ab)\right),
	\label{eq:grad_pmin_incoh}
\end{align}
whose terms are derived in detail in Appendix~\ref{sec:app_noncoh_grad}. Using (\ref{eq:grad_pmin_incoh}), the gradient of the \ac{zzb} with respect to $\tilde{\bm{\rho}}$ can then be constructed with entries
\begin{align}
	&\frac{\partial}{\partial \tilde{\rho}_n} \ZZBnRhoR \nonumber\\
	&= \frac{\Ts^2}{\Na}\int_{0}^{\Na} z (\Na - z) \left(\frac{\partial}{\partial \tilde{\rho}_n} \PminNR \right) dz.
	\label{eq:grad_zzb_incoh}
\end{align}

%% Integer optimization problem
\subsection{Integer-Constrained Problem}
\label{sec:integer_constrained}

The results obtained from the optimization in (\ref{eq:zzb_optim}) may not always be practical since power is allocated across the entire set of available subcarriers. In a dual-functional system, resources must also be allocated for communications purposes, and it may be preferable to multiplex these resources in frequency. Therefore, it is desirable to further constrain the optimization problem to the selection of a set of $L$ subcarriers. The optimization problem is also further restricted to equal allocation of power across these $L$ subcarriers, simplifying the optimization and avoiding narrowband interference concerns caused by allocating significant power into a small number of subcarriers. Under these constraints, the optimization problem can be written as
\begin{align}
	\min_{\bm{\rho}} \quad & \ZZBRho \nonumber\\
	\textrm{s.t.} \quad & \rho[k] \in \left\{0,\frac{1}{L}\right\}, \quad k \in \mathcal{K}\nonumber\\
	& \bm{1}^T\bm{\rho} = 1
	\label{eq:zzb_int_optim}
\end{align}

This is an NP-hard nonlinear integer programming problem. For any reasonable number of subcarriers $K$, a brute-force search is computationally prohibitive and requires evaluating ${K \choose L}$ selections. In light of these difficulties, a branch-and-bound algorithm \cite{tuy1998convex} is proposed.

Recognizing that (\ref{eq:zzb_optim}) is a convex relaxation of (\ref{eq:zzb_int_optim}), the solutions to the convex-constrained problem in (\ref{eq:zzb_optim}) serve as a lower bound to the integer-constrained problem in (\ref{eq:zzb_int_optim}). Furthermore, the problem can be divided into subproblems in which certain binary decisions have already been made through the addition of equality constraints. In particular, consider two disjoint sets of subcarrier indices: $\Kz \subseteq \mathcal{K}$ for which the power allocations are set to $0$ and $\Ko \subseteq \mathcal{K}$ for which the power allocations are set to $1/L$. For all subcarriers not included in either of these two sets, $k \in \mathcal{K} \setminus \Kz \bigcup \Ko$, power allocations can be optimized without integer constraints to form a lower bound on each subproblem. Adding these equality constraints results in the relaxed subproblem
\begin{align}
	\min_{\bm{\rho}} \quad & \ZZBRho \nonumber\\
	\textrm{s.t.} \quad & \rho[k] \geq 0, \quad k \in \mathcal{K}\nonumber\\
	& \bm{1}^T\bm{\rho} = 1\nonumber\\
	& \rho[k] = 0,\quad k \in \Kz\nonumber\\
	& \rho[k] = \frac{1}{L},\quad k \in \Ko
	\label{eq:zzb_optim_relax}
\end{align}

When $\Kz \bigcup \Ko = \emptyset$, this simplifies to (\ref{eq:zzb_optim}), and when $|\Ko| > L$, the problem is infeasible. This relaxed subproblem allows the branch and bound algorithm to determine lower bounds on (\ref{eq:zzb_int_optim}). Upper bounds on (\ref{eq:zzb_int_optim}) are then found by evaluating the \ac{zzb} on a feasible approximation of the solutions of the relaxed subproblem in (\ref{eq:zzb_optim_relax}). These feasible approximations are found by setting the $L$ subcarriers with the highest power allocations to a power of $1/L$ and setting all other subcarriers to a power of zero. This rounding function is denoted $\max_L(\bm{\rho})$.

With these upper and lower bounds, Algorithm \ref{alg:branch_and_bound} finds approximate solutions for (\ref{eq:zzb_int_optim}). It adds subproblems to a priority queue, sorted such that subproblems with the smallest lower bound are popped first. Each subproblem has a relaxed solution $\check{\bm{\rho}}$, a lower bound $\check{\text{LB}}$ achieved by the relaxed solution, and constraint sets $\KzC$ and $\KoC$. In each iteration, a subproblem is popped from the queue and a subcarrier index $\kbranch$ is selected for branching. Branching creates two additional subproblems: one where the power at subcarrier $\kbranch$ is set to zero and one where the power at subcarrier $\kbranch$ is set to $1/L$. For each subproblem, a relaxed solution $\bm{\rho}$, a lower bound $\text{LB}$, and an upper bound are computed. Each subproblem is then inserted into the priority queue so long as its lower bound is less than the tightest upper bound $\text{UB}$, after which the algorithm iterates. Since there are no guarantees that this algorithm converges to the optimal solution in fewer iterations than the brute force search, the iterations continue until either the optimality gap $\delta \triangleq \left(\text{UB} - \check{\text{LB}}\right)/\check{\text{LB}}$ is less than the tolerable gap $\delta_{\text{tol}}$ or the number of iterations exceeds $N_{\text{iter}}$.

% Algorithm
\begin{algorithm}
	\caption{Branch and Bound}
	\label{alg:branch_and_bound}
	\small
	\hspace*{\algorithmicindent}\textbf{Input}: $N_{\text{iter}},\delta_{\text{tol}}$\\
	\hspace*{\algorithmicindent}\textbf{Output}: $\bm{\rho}^{\ast}$
	
	\begin{algorithmic}[1]
		\STATE $\Kz \leftarrow \emptyset$, $\Ko \leftarrow \emptyset$ \COMMENT{\textbf{Initialize}}
		\STATE $\bm{\rho} \leftarrow \text{Solution to (\ref{eq:zzb_optim})}$
		\STATE $\text{LB} \leftarrow \ZZBRho,\quad\text{UB} \leftarrow \ZZBRhoMaxL$
		\STATE $\bm{\rho}^{\ast} \leftarrow \max_L(\bm{\rho})$
		\STATE $\text{iter} \leftarrow 0,\quad\delta \leftarrow \infty$
		\STATE Insert $\{\bm{\rho},\text{LB},\Kz,\Ko\}$ into \texttt{PriorityQueue}
		\WHILE{$\text{iter} < N_{\text{iter}}$ \OR $\delta < \delta_{\text{tol}}$}
			\STATE Pop $\{\check{\bm{\rho}},\check{\text{LB}},\KzC,\KoC\}$ from \texttt{PriorityQueue}
			\STATE $\delta = \frac{\text{UB} - \check{\text{LB}}}{\check{\text{LB}}}$
			\STATE $\kbranch = \arg\min_{k \in \mathcal{K}\setminus\{\KzC \bigcup \KoC\}} |\check{\rho}[k] - \frac{1}{2L}|$ \COMMENT{\textbf{Branch}}
			% Case 0
			\STATE $\Kz \leftarrow \KzC \bigcup \{\kbranch\}$, $\Ko \leftarrow \KoC$ \COMMENT{\textbf{Subproblem 1}}
			\STATE $\bm{\rho} \leftarrow \text{Solution to (\ref{eq:zzb_optim_relax})}$
			\STATE $\text{LB} \leftarrow \ZZBRho$
			\IF{$\ZZBRhoMaxL < \text{UB}$}
				\STATE $\text{UB} \leftarrow \ZZBRhoMaxL$
				\STATE $\bm{\rho}^{\ast} \leftarrow \max_L(\bm{\rho})$
			\ENDIF
			\IF{$\text{LB} < \text{UB}$}
				\STATE Insert $\{\bm{\rho},\text{LB},\Kz,\Ko\}$ into \texttt{PriorityQueue}
			\ENDIF
			% Case 1
			\STATE $\Kz \leftarrow \KzC$, $\Ko \leftarrow \KoC \bigcup \{\kbranch\}$ \COMMENT{\textbf{Subproblem 2}}
			\STATE $\bm{\rho} \leftarrow \text{Solution to (\ref{eq:zzb_optim_relax})}$
			\STATE $\text{LB} \leftarrow \ZZBRho$
			\IF{$\ZZBRhoMaxL < \text{UB}$}
				\STATE $\text{UB} \leftarrow \ZZBRhoMaxL$
				\STATE $\bm{\rho}^{\ast} \leftarrow \max_L(\bm{\rho})$
			\ENDIF
			\IF{$\text{LB} < \text{UB}$}
				\STATE Insert $\{\bm{\rho},\text{LB},\Kz,\Ko\}$ into \texttt{PriorityQueue}
			\ENDIF
			\STATE $\text{iter} \leftarrow$ \text{iter} + 1
		\ENDWHILE
	\end{algorithmic}
\end{algorithm}

% TODO
% Should I show an example of LB and UB?

\section{Numerical Results}
\label{sec:num_results}

\subsection{Method and Setup}

% Describe parameters used
In a numerical demonstration, \ac{zzb}-optimal power allocations were found using both the convex-constrained  problem (\ref{eq:zzb_optim}) and the integer-constrained problem (\ref{eq:zzb_int_optim}), and each problem was solved for both coherent reception, using the error probability in (\ref{eq:Pmin_coh}), and noncoherent reception, using the error probability in (\ref{eq:Pmin_incoh}). These four variations of optimized allocations will be referred to as the coherent convex-optimized, coherent integer-optimized, noncoherent convex-optimized, and noncoherent integer-optimized allocations.
For all problems, the \ac{ofdm} signal was configured with $K = 64$ subcarriers and a subcarrier spacing of $\Delta_{\text{f}} = \SI{15.625}{\kilo\hertz}$, yielding a symbol duration of \SI{64}{\micro\second}. The receiver's \textit{a priori} \ac{toa} duration was $\Na = 16$ samples or $\Ta  = \SI{16}{\micro\second}$.

The coherent convex-constrained problem (\ref{eq:zzb_optim}) was solved using an interior-point method using the analytical Jacobian and Hessian as provided by (\ref{eq:grad_zzb_coh}) and (\ref{eq:hess_zzb_coh}), iterating until convergence or a maximum of $30$ iterations. The noncoherent version was solved similarly using the gradient in (\ref{eq:grad_zzb_incoh}) and an approximate Hessian computed using BFGS \cite{fletcher2000practical}.

The integer-constrained problem (\ref{eq:zzb_int_optim}) was solved using the branch-and-bound algorithm as outlined in Algorithm~\ref{alg:branch_and_bound}, solving each relaxed subproblem (\ref{eq:zzb_optim_relax}) using an identical strategy as with the convex-constrained problem. The power allocations were restricted to $L = 8$ subcarriers. For both the coherent and noncoherent schemes, the branch-and-bound algorithm was configured with a tolerance $\delta_{\text{tol}} = 0.01$ and a maximum number of iterations $N_{\text{iter}} = 2000$.

% Show theoretical results

\begin{figure*}[h!]
	\centering
	\begin{subfigure}{0.245\linewidth}
		\centering
		\includegraphics[width=\textwidth]{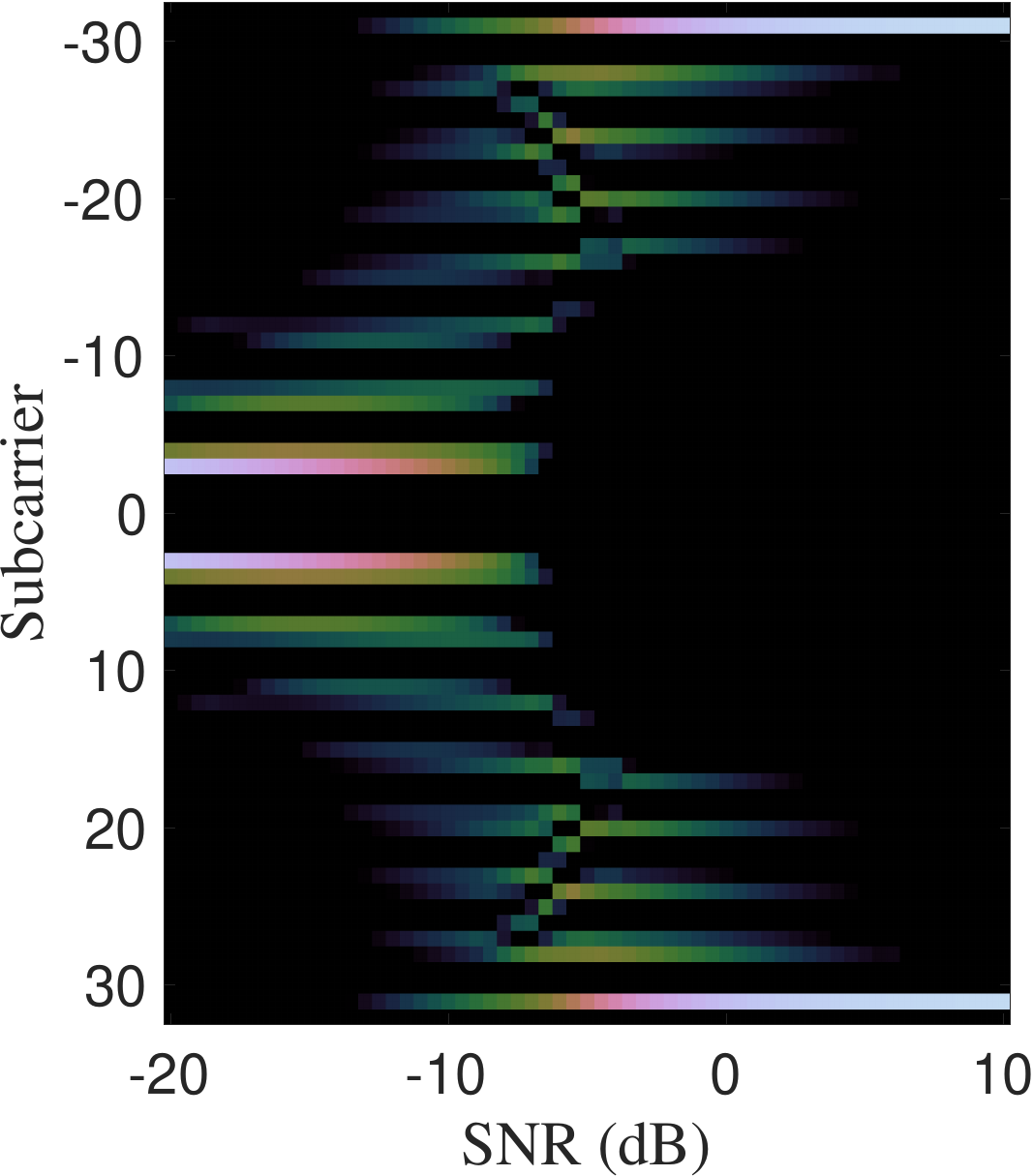}
		\caption{}
		\label{fig:rho_coh}
	\end{subfigure}
	\begin{subfigure}{0.235\linewidth}
		\centering
		\includegraphics[width=\textwidth]{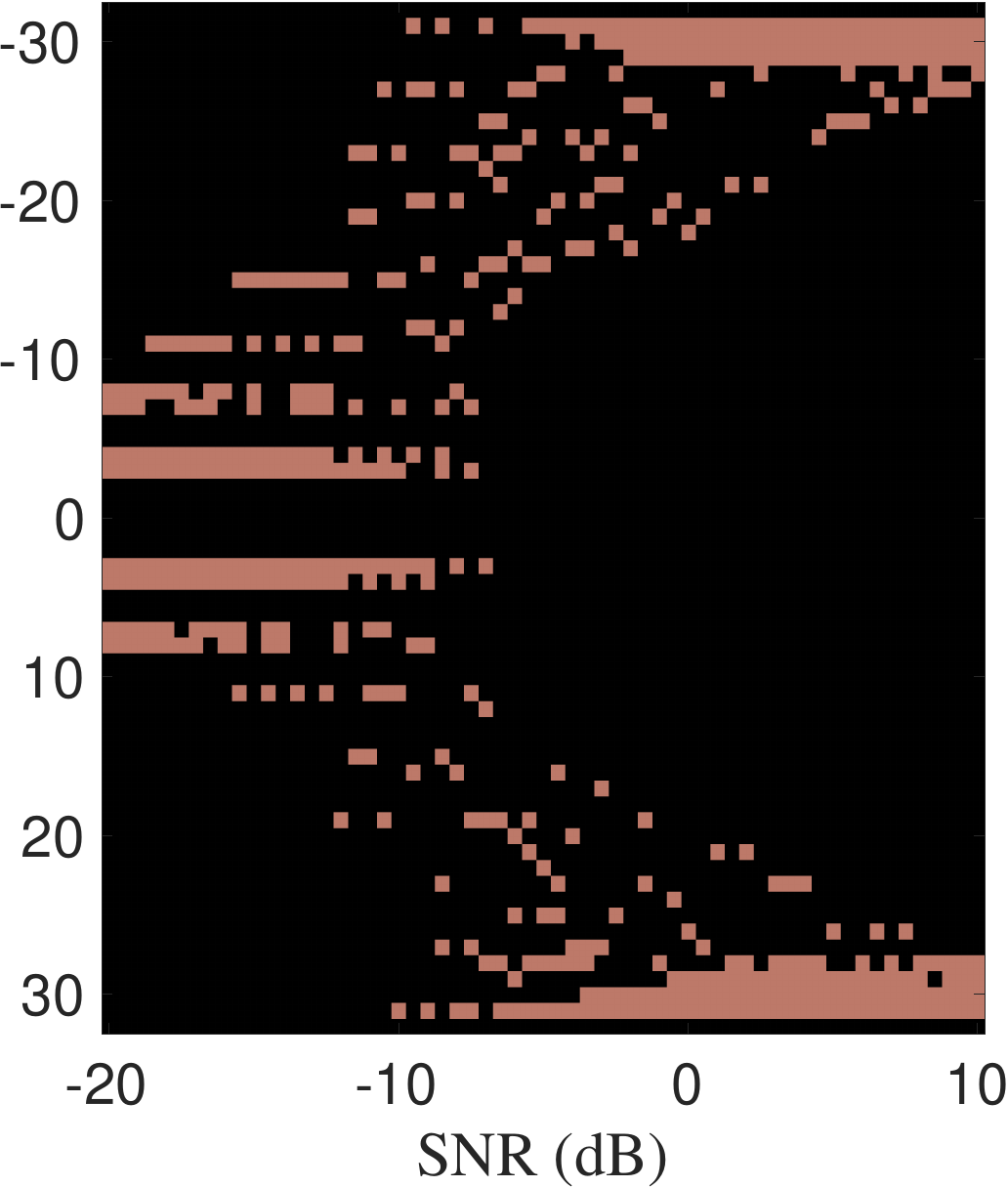}
		\caption{}
		\label{fig:rho_int_coh}
	\end{subfigure}
	\begin{subfigure}{0.235\linewidth}
		\centering
		\includegraphics[width=\textwidth]{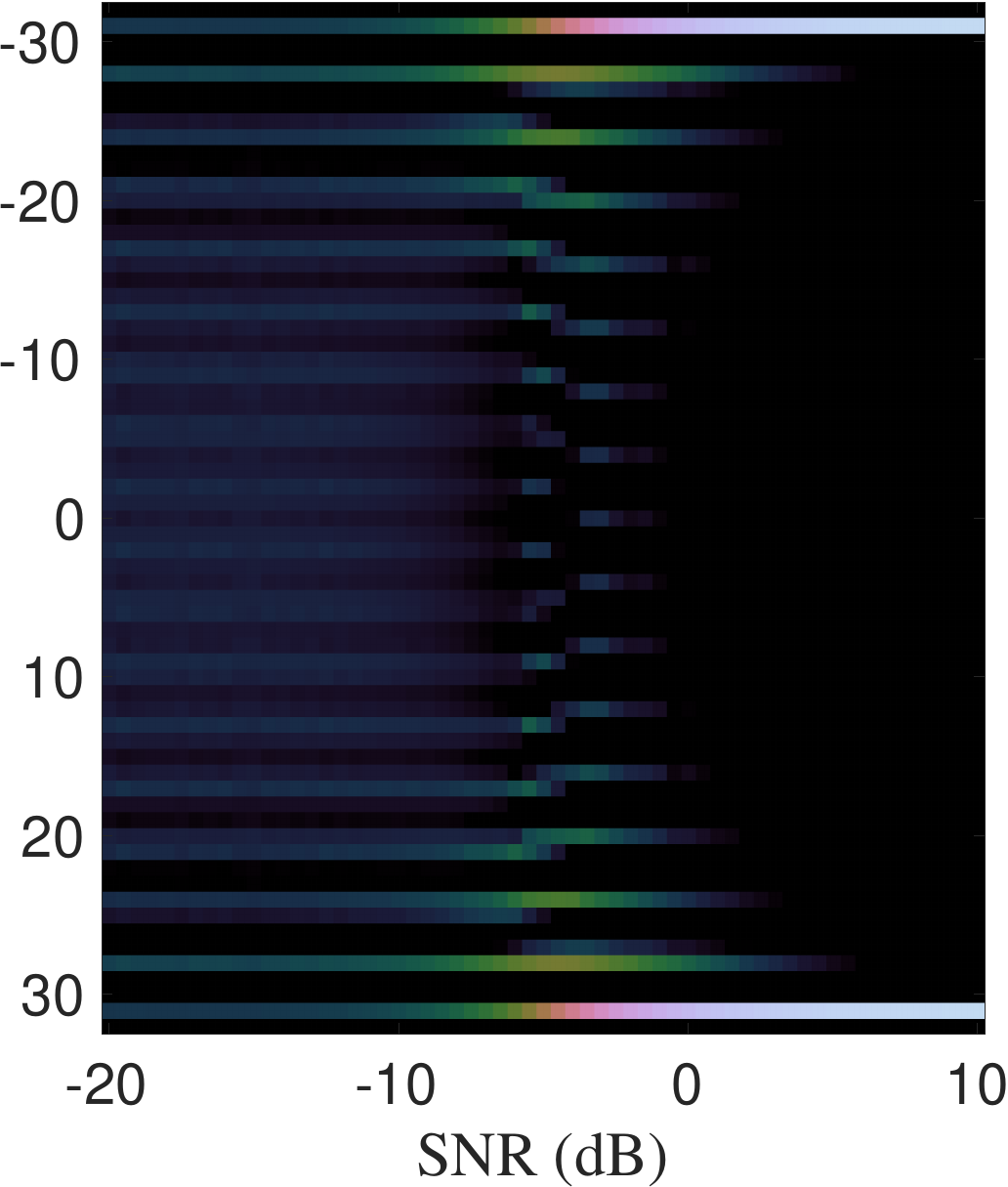}
		\caption{}
		\label{fig:rho_incoh}
	\end{subfigure}
	\begin{subfigure}{0.26\linewidth}
		\centering
		\includegraphics[width=\textwidth]{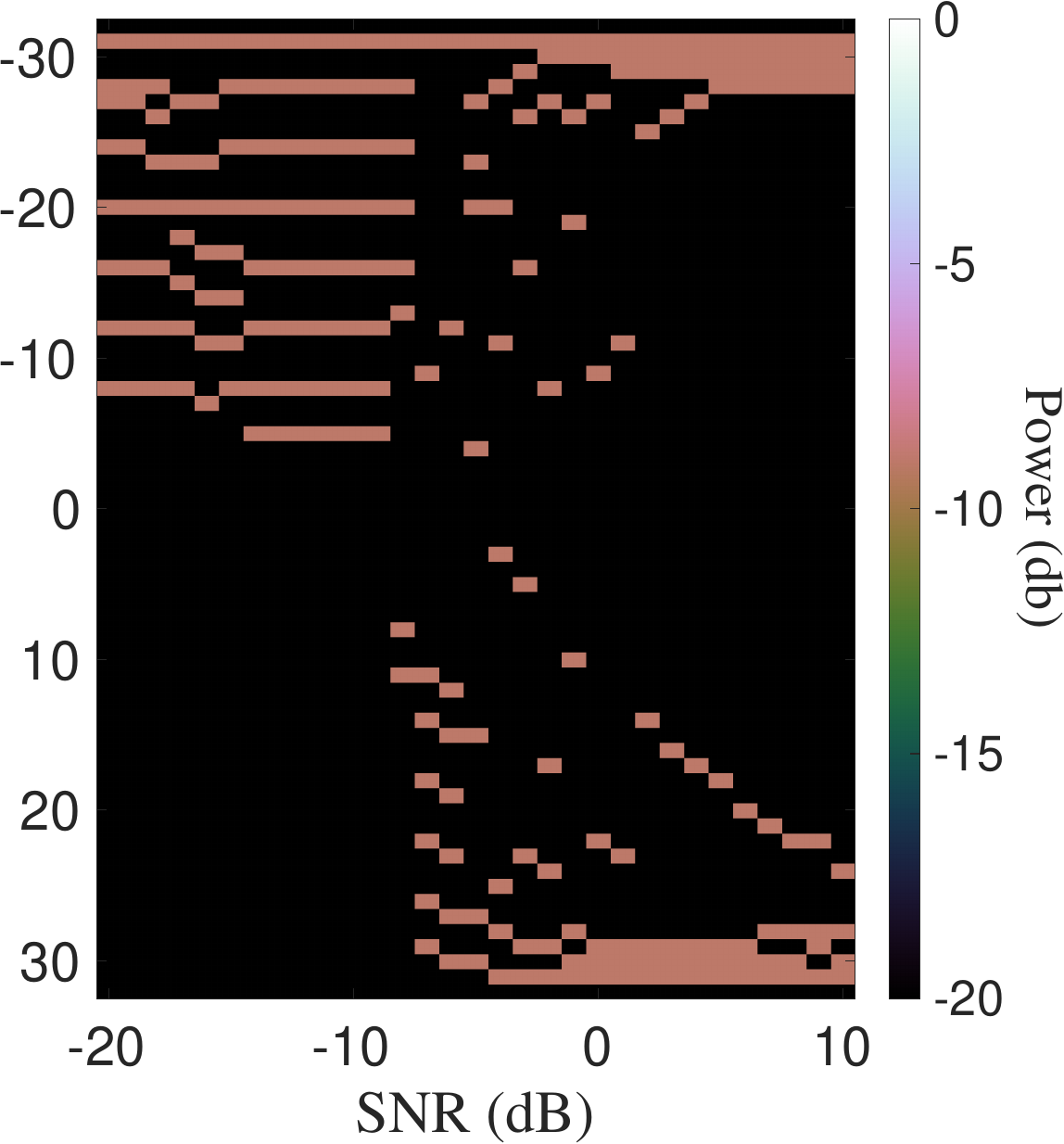}
		\caption{}
		\label{fig:rho_int_incoh}
	\end{subfigure}
	\caption{The optimized subcarrier power allocations over a range of \acp{snr} for the coherent convex-constrained problem (a), the coherent integer-constrained problem (b), the noncoherent convex-constrained problem (c), and the noncoherent integer-constrained problem (d). For the integer-constrained problems, $L=8$.}
	\label{fig:rho_list}
\end{figure*}

\begin{figure*}[h!]
	\centering
	\begin{subfigure}{0.245\linewidth}
		\centering
		\includegraphics[width=\textwidth]{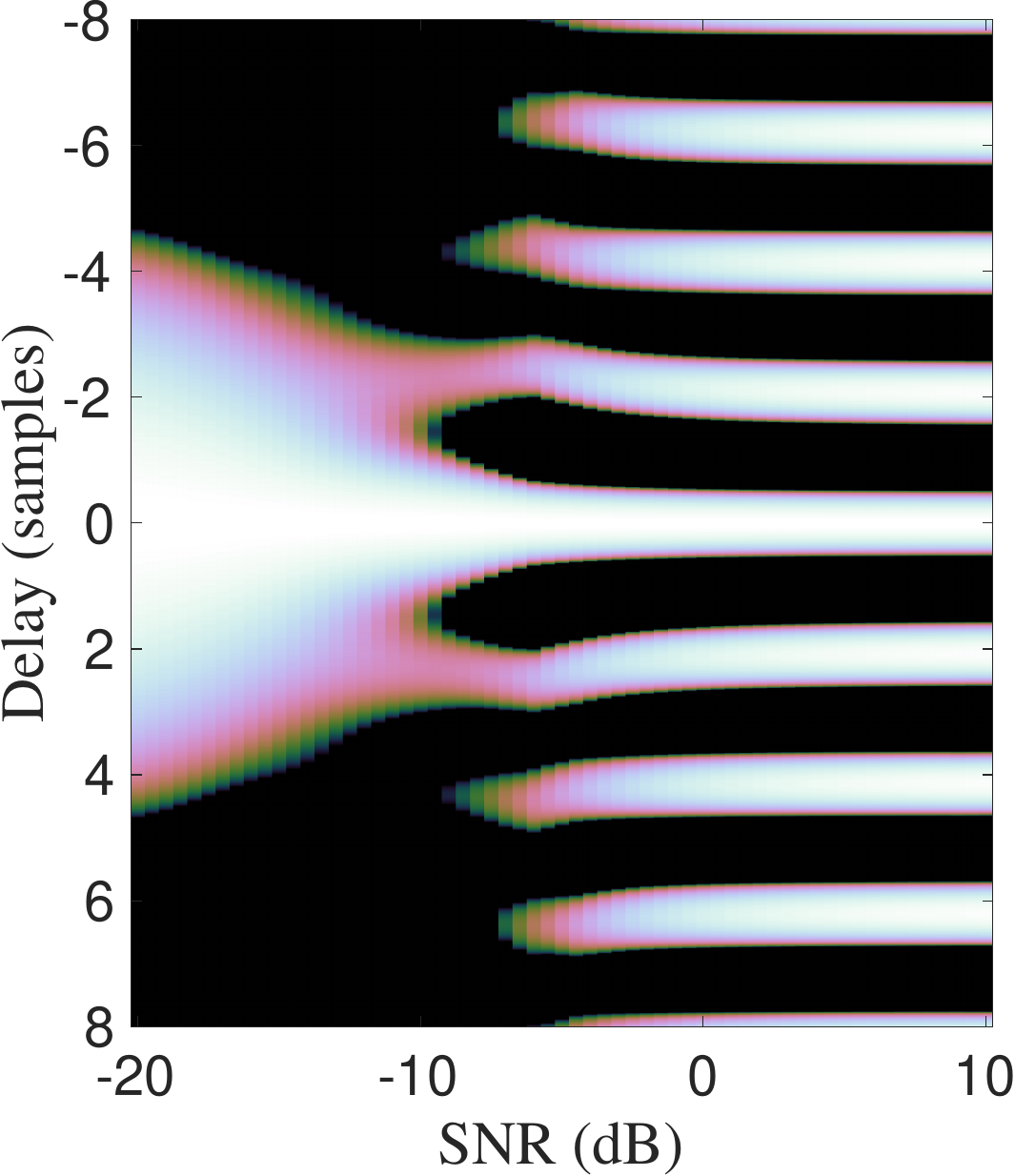}
		\caption{}
		\label{fig:acf_coh}
	\end{subfigure}
	\begin{subfigure}{0.23\linewidth}
		\centering
		\includegraphics[width=\textwidth]{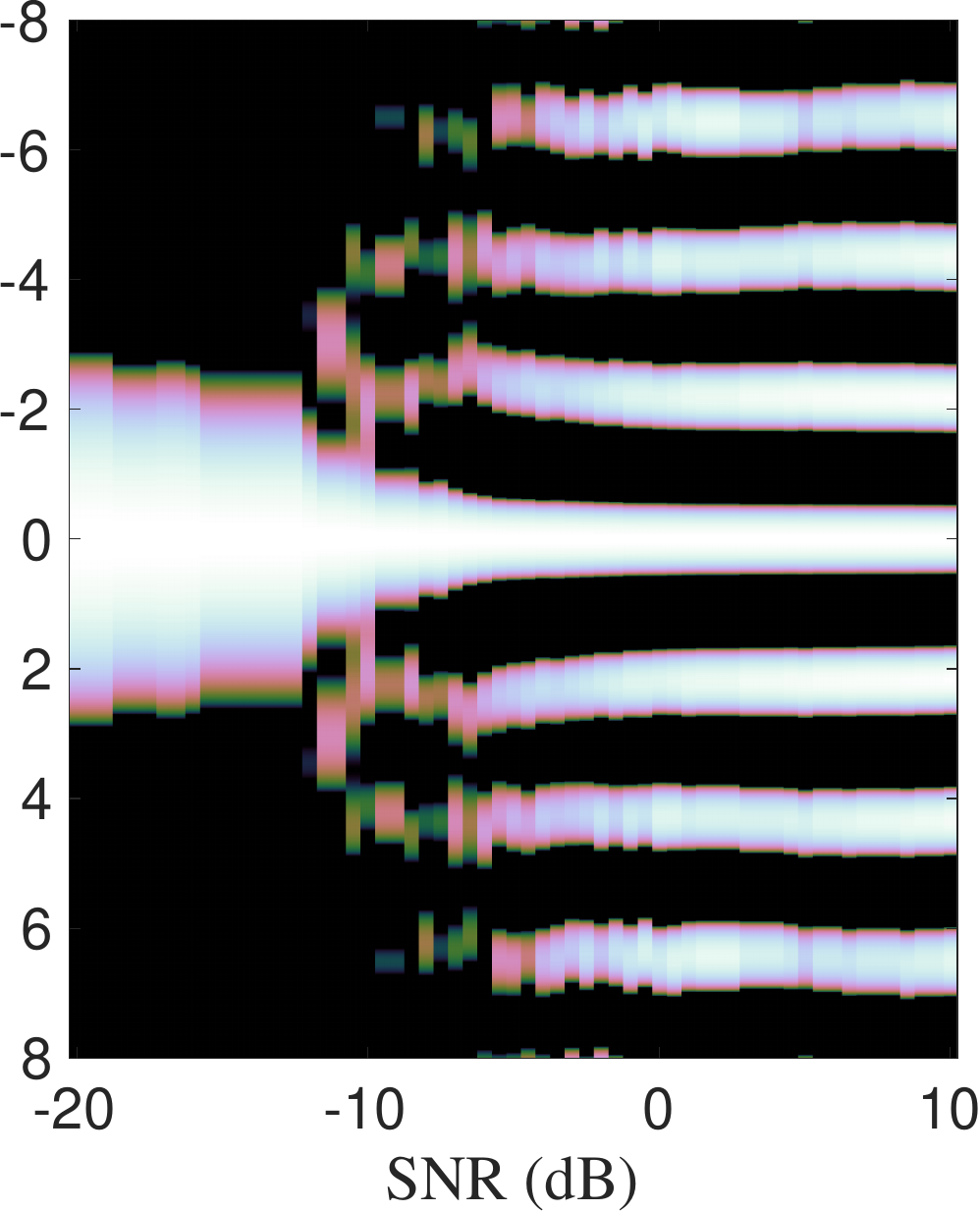}
		\caption{}
		\label{fig:acf_int_coh}
	\end{subfigure}
	\begin{subfigure}{0.23\linewidth}
		\centering
		\includegraphics[width=\textwidth]{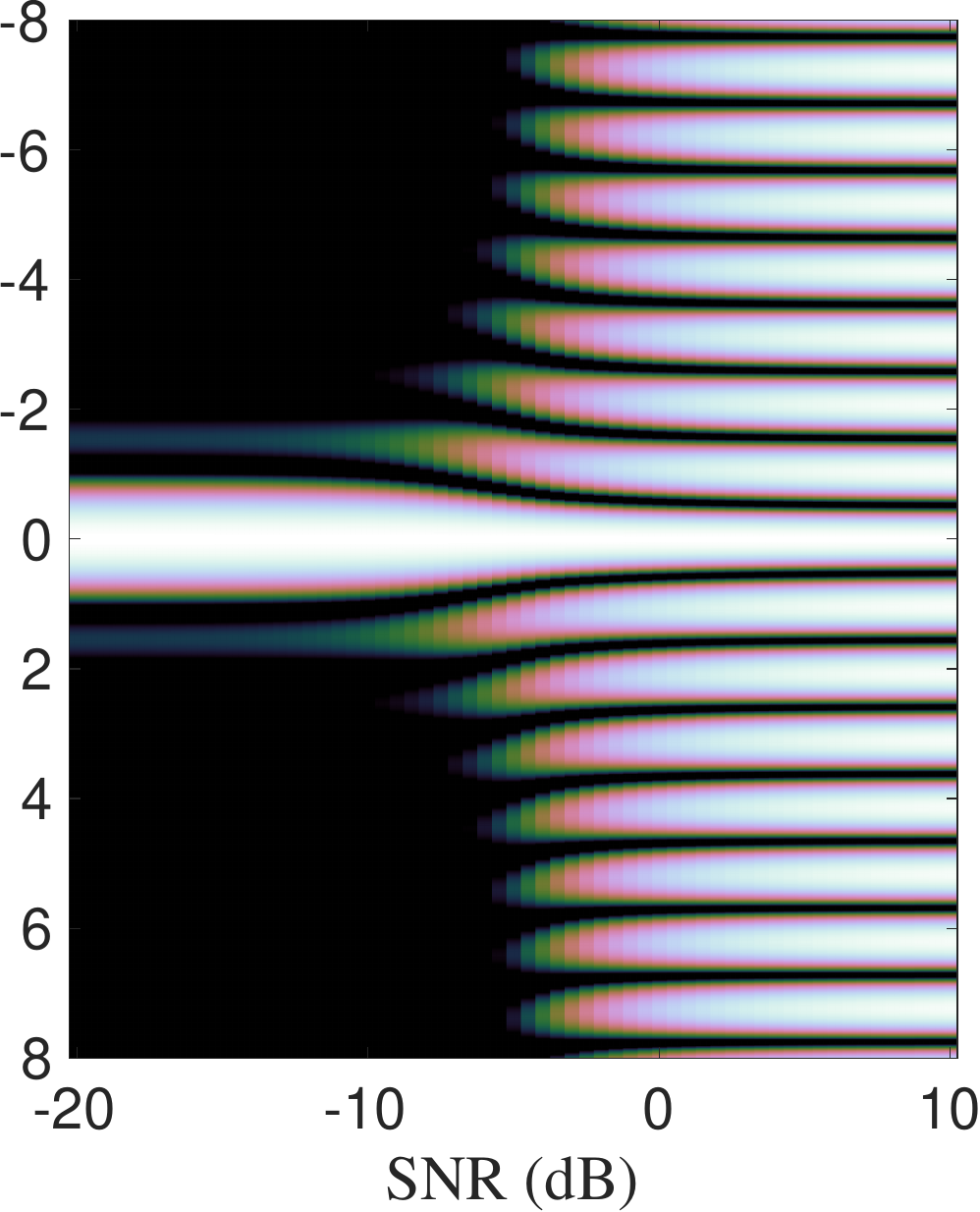}
		\caption{}
		\label{fig:acf_incoh}
	\end{subfigure}
	\begin{subfigure}{0.26\linewidth}
		\centering
		\includegraphics[width=\textwidth]{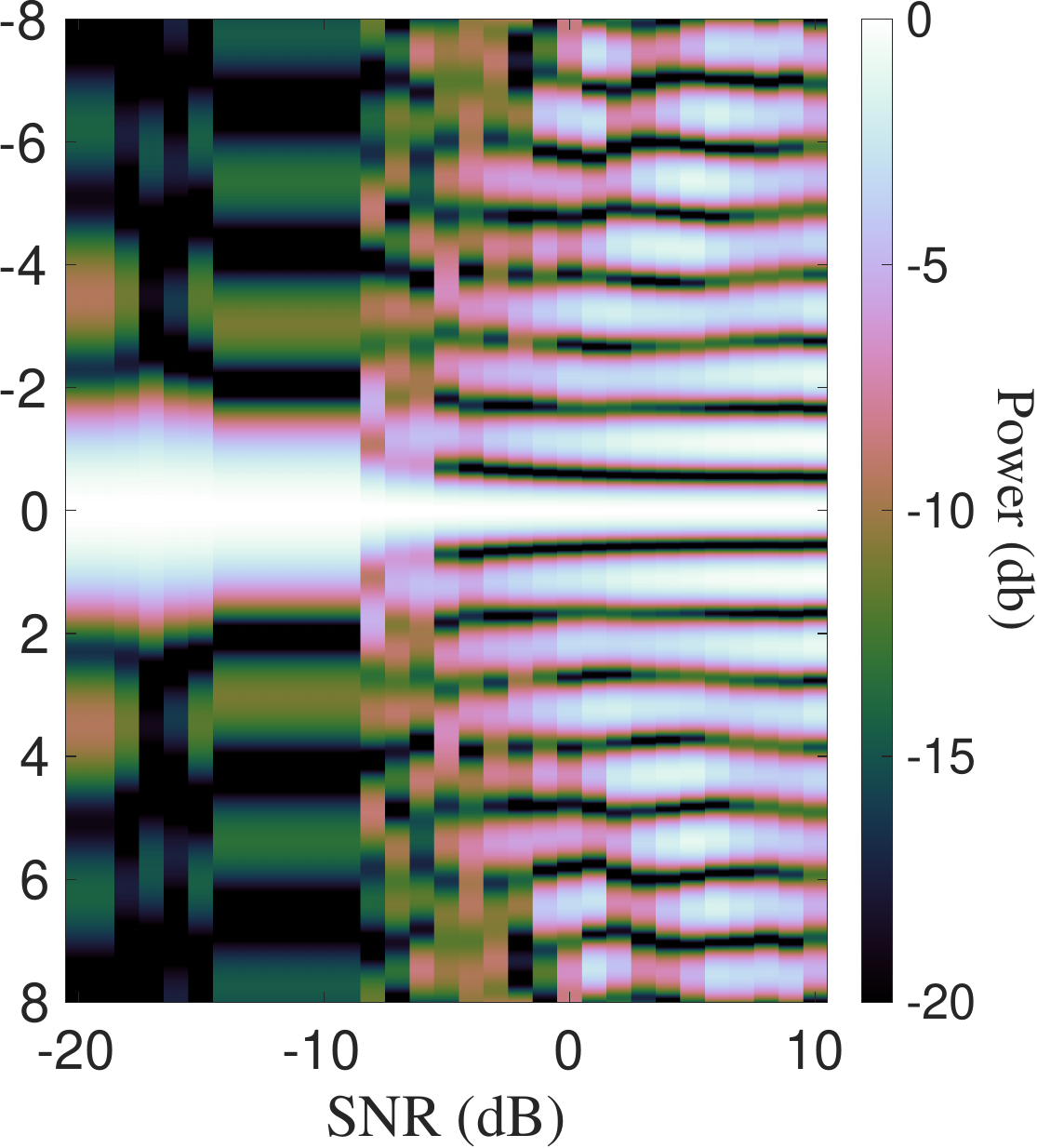}
		\caption{}
		\label{fig:acf_int_incoh}
	\end{subfigure}
	\caption{The \acp{acf} obtained from the allocations in Fig.~\ref{fig:rho_list} over a range of \acp{snr} for the coherent convex-constrained problem (a), the coherent integer-constrained problem (b), the noncoherent convex-constrained problem (c), and the noncoherent integer-constrained problem (d). Negative values in the coherent \acp{acf} are rounded up to \SI{-20}{\decibel}.}
	\label{fig:acf_list}
\end{figure*}

% TODO
% Need to clarify pre-integration SNR vs post-integration SNR
% I could separate coherent and noncoherent plots, but I think I like this
\begin{figure}
	\centering
	\includegraphics[width=0.9\linewidth]{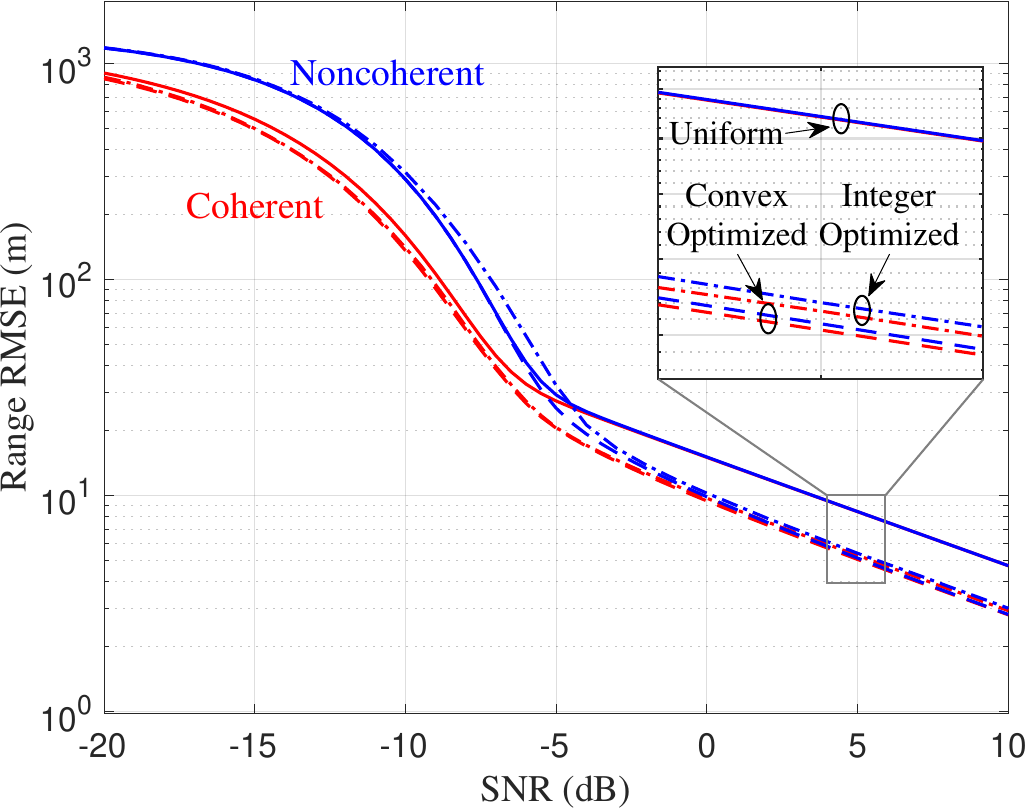}
	\caption{The \ac{zzb} for the uniform allocation, convex-constrained optimized allocation, and the integer-constrained optimized allocation. Results are shown for both the coherent and noncoherent schemes.}
	\label{fig:zzb_combined}
\end{figure}

% Coherent

%\begin{figure}[h!]
%	\centering
%%	\begin{subfigure}{0.32\linewidth}
%%		\centering
%%		\includegraphics[width=\textwidth]{}
%%		\caption{}
%%		\label{fig:zzb_coh}
%%	\end{subfigure}
%	\begin{subfigure}{0.49\linewidth}
%		\centering
%		\includegraphics[width=\textwidth]{figs/pdf_output/cropped/rho_coh_cp16-crop.pdf}
%		\caption{}
%		\label{fig:rho_coh}
%	\end{subfigure}
%	\begin{subfigure}{0.49\linewidth}
%		\centering
%		\includegraphics[width=\textwidth]{figs/pdf_output/cropped/acf_coh_cp16-crop.pdf}
%		\caption{}
%		\label{fig:acf_coh}
%	\end{subfigure}
%	\caption{The \ac{zzb} (a), subcarrier power allocations (b), and \ac{acf}s (c) for the optimized signals using the coherent \ac{zzb}. Negative \ac{acf} values are rounded up to \SI{-20}{\decibel}.}
%\end{figure}

% Paragraph introducing figures
Fig.~\ref{fig:rho_list} visualizes the resulting optimized power allocations across a range of \acp{snr}. Fig.~\ref{fig:rho_coh} shows the coherent convex-optimized allocations, Fig.~\ref{fig:rho_int_coh} shows the coherent integer-optimized allocations, Fig.~\ref{fig:rho_incoh} shows the noncoherent convex-optimized allocations, and Fig.~\ref{fig:rho_int_incoh} shows the noncoherent integer-optimized allocations. The \ac{toa} \acp{acf} obtained from these allocations are visualized in Fig.~\ref{fig:acf_list} in the same order. Fig.~\ref{fig:zzb_combined} plots the \ac{zzb} \ac{toa} \acp{rmse} for all four variations and compares these against the coherent and noncoherent \acp{zzb} obtained from a uniform power allocation across all subcarriers.

% Analysis, general allocation strategy
There are common trends seen in the optimal allocations and the resulting \acp{acf} over the \ac{snr} range for all four variations. At low \ac{snr}, power is allocated in a manner that creates a wide mainlobe and minimizes sidelobes within the \text{a priori} \ac{toa} region. As \ac{snr} increases, sidelobes become more tolerable, allowing power to be allocated closer to the extremities of the spectrum to sharpen the mainlobe. At high \ac{snr}, the power allocation approaches the \ac{crlb}-optimal design of allocating all power to the outermost subcarriers yet maintains an appropriate amount of power in the inner subcarriers to prevent grating lobe ambiguities in the \ac{acf}.  The \ac{zzb} is a powerfully general optimization criterion: it captures the transition from the low \ac{snr} regime to the high \ac{snr} regime.

\subsection{Discussion}

% Analysis, coherent 
Consider the coherent and noncoherent cases in greater detail, starting with the coherent case, whose convex-optimized allocations are shown in Fig.~\ref{fig:rho_coh} and whose integer-optimized allocations are shown in Fig.~\ref{fig:rho_int_coh}. Due to the restrictions on subcarrier count and power, the integer-optimized allocations show a more gradual transition toward the \ac{crlb}-optimal choice of placing all power in the subcarrier extremities, placing power in a small number of central subcarriers to control sidelobe levels appropriately. Even with these constraints, the integer-optimized allocations create an \ac{acf}, shown in Fig.~\ref{fig:acf_int_coh}, that closely resembles the convex-optimized \ac{acf}, shown in Fig.~\ref{fig:acf_coh}. The branch-and-bound algorithm is capable of determining the best placement of these subcarriers while avoiding impractical brute-force searches. Moreover, Fig.~\ref{fig:zzb_combined} shows that the integer-optimized allocations achieve a \ac{zzb} that is negligibly worse than the \ac{zzb} achieved by the convex-optimized allocations over the entire \ac{snr} range. These are powerful results: near-optimal \ac{toa} performance can be achieved while only using a small fraction of the available spectral resources, allowing allocation of the unused subcarriers for communications or other purposes.

% Analysis, noncoherent convex-optimized
Next consider the noncoherent case. Fig.~\ref{fig:rho_incoh} visualizes the noncoherent convex-optimized allocations, which have a notably different distribution than the those in the coherent case shown in Fig.~\ref{fig:rho_coh}. At low \ac{snr}, the power is distributed more uniformly across the spectrum, with approximately equal spacing between subcarriers containing significant power. This spacing permits grating lobe ambiguities in the \ac{acf} that fall outside of the \textit{a priori} \ac{toa} duration. As \ac{snr} increases, the spacing shifts slightly and power is increasingly allocated to the outermost subcarriers. Fig.~\ref{fig:acf_incoh} shows the noncoherent \acp{acf} of the optimized power allocations, depicting how these power allocations transition from minimal sidelobe levels at low \ac{snr} to minimal mainlobe width and tolerable sidelobes at high \ac{snr}. In contrast to the \acp{acf} in Fig.~\ref{fig:acf_coh}, many more sidelobes are present in Fig.~\ref{fig:acf_incoh} because sidelobes that previously took negative amplitudes in the coherent \ac{acf} take positive values in the noncoherent \ac{acf}. This increases the likelihood of sidelobe-dominated \ac{toa} errors and provides insight into the differences in optimal power allocation between coherent and noncoherent cases.

% Analysis, noncoherent integer-optimized
Fig.~\ref{fig:rho_int_incoh} visualizes the noncoherent integer-optimized allocations, which exhibit a unique structure. The noncoherent \ac{acf} has no dependence on the absolute placement of subcarriers within the spectrum and instead depends only on relative placements of subcarriers with respect to each other. This is in stark contrast to the coherent \ac{acf} which does factor in absolute subcarrier placement. Without loss of generality, the noncoherent integer-constrained problem was initialized with power placed in subcarrier $-31$. At low \ac{snr}, the optimized power allocations take a sparse pattern which does not span the entire bandwidth. A rapid transition then occurs from \SI{-8}{\decibel} to \SI{-5}{\decibel} where the allocations expand to exploit the entire bandwidth. As \ac{snr} increases further, more power is allocated into the outermost subcarriers and fewer subcarriers are required in the center to maintain appropriate sidelobe levels. The transition toward allocating all power into the outermost subcarriers is slightly slower than that seen in the coherent integer-optimized allocations. Fig.~\ref{fig:acf_int_incoh} visualizes the noncoherent \acp{acf} of these integer-optimized allocations. Unlike the previous variations, these \acp{acf} have much higher sidelobes at low \ac{snr}, even compared against the coherent integer-optimized \acp{acf} in Fig.~\ref{fig:acf_int_coh}. This highlights the importance of using the noncoherent \ac{zzb} for optimization, as sidelobes are more difficult to control in the noncoherent case and coherent estimation may be infeasible at lower \acp{snr}.

% Analysis ZZB
Fig.~\ref{fig:zzb_combined} quantifies the theoretical performance of the proposed optimized allocations for \ac{toa} estimation. In the coherent case, both the convex-optimized and integer-optimized allocations outperform the uniform allocation across the entire range of \acp{snr}. Above \SI{-7}{\decibel} \ac{snr}, the optimized allocations achieve significantly reduced \ac{toa} \acp{rmse}, ultimately reducing the \ac{toa} \ac{rmse} by up to \SI{40}{\percent} in the high \ac{snr} regime. The \ac{rmse} reduction continues into the low \ac{snr} regime, albeit less significantly, where the optimized allocations are able to exploit the receiver's \textit{a priori} \ac{toa} knowledge. In the noncoherent case, the convex-optimized allocations achieve \acp{rmse} similar to the uniform allocation's \acp{rmse} in the low \ac{snr} regime but outperform the uniform allocation above \SI{-6}{\decibel} \ac{snr}. However, the noncoherent integer-optimized allocations experience an earlier thresholding effect below \SI{-4.5}{\decibel} and \ac{toa} errors increase faster than those with the uniform power allocation. This gap is explained by noting that the uniform allocation uses all $64$ subcarriers as pilot resources, whereas the integer-optimized allocations only requires that $8$ subcarriers be dedicated as pilot resources. In the high \ac{snr} regime, the noncoherent integer-optimized allocations achieve reduced \ac{toa} \acp{rmse} that come close to the \acp{rmse} achieved by the convex-optimized allocations.

% TODO
% Should I generate results with equally-spaced pilot resources? These may be a more fair comparison in the integer program. I would need to address how they become ambiguous once spacing increases beyond a certain point.

\section{Experimental Results}
\label{sec:exp_results}

\begin{table}[t]
	\centering
	\caption{SDR Parameters}
	\begin{tabular}[c]{ll}
		\toprule
		\multicolumn{2}{c}{OFDM Parameters} \\
		\midrule
		$K$ & 64 subcarriers \\
		$\Delta_{\text{f}}$ (Subcarrier Spacing) & $\sfrac{3.125}{256}$\SI{}{\mega\hertz} \\
		Symbol Duration & \SI{81.92}{\micro\second} \\
		Cyclic Prefix Duration &  \SI{40.96}{\micro\second} \\
		Bandwidth & \SI{781.25}{\kilo\hertz} \\
		\toprule
		\multicolumn{2}{c}{Measurement Parameters} \\
		\midrule
		TX Gain & \SIrange{6}{33}{\decibel} \\
		RX Gain & \SI{10}{\decibel} \\
		$\fc$ (Carrier Frequency) & \SI{2.5}{\giga\hertz} \\
		Digital Offset & \SI{781.25}{\kilo\hertz} \\
		Sample Rate & \SI{3.125}{\mega{}S\per\second} \\
		$z_{0}$ (Relative Delay) & \SI{10.441}{\micro\second} \\
		$\Ta$ (\ac{toa} Prior) & \SI{20.48}{\micro\second} \\
		TX-RX Distance & $\sim$\SI{25}{\centi\meter}\\
		\bottomrule
	\end{tabular}
	\label{tab:usrp_params}
\end{table}

% Describe SDR parameters
In addition to the theoretical results in Sec.~\ref{sec:num_results}, the \ac{toa} estimation performance of the noncoherent optimized allocations was evaluated on an \ac{sdr} measurement platform. Only the noncoherent allocations are evaluated because they do not require prior knowledge of the carrier phase, making these allocations more applicable to practical ranging systems.

\subsection{Measurement Process}

Experiments were conducted with two Ettus USRP N200s with synchronized clocks and timing. The transmit and receive antennas were spaced approximately \SI{25}{\centi\meter} apart with an unobstructed line-of-sight propagation path. The parameters of the USRP devices are listed in Table~\ref{tab:usrp_params}. The USRPs operated with direct conversion frontends, resulting in a strong DC component. To mitigate DC noise, the DC offset was calibrated prior to measurement and the \ac{ofdm} signal was digitally offset by \SI{781.25}{\kilo\hertz}. The subcarrier noise variance $\hat{\sigma}^2$ was also measured from $1000$ \ac{ofdm} symbols sampled at the receiver USRP with the transmitter USRP off. The measurement process consisted of an \ac{snr} measurement stage and a \ac{toa} measurement stage.

% Describe receiver processing
In the \ac{snr} measurement stage, $M_{\text{SNR}} = 250$ \ac{ofdm} symbols with a uniform pilot allocation are transmitted and received. After reception, the receiver USRP computes the complex correlation function as in (\ref{eq:complex_corr}) for all symbols, denoted for the $m$th symbol as $\tilde{A}_{m}\left(z,\bm{x}\right)$ for $m = 0, 1, \ldots, M_{\text{SNR}}-1$. The receiver then estimates the integrated \ac{snr} from the peak power of the coherent sum of all complex correlation functions
\begin{align}
	\hat{\gamma} = \max_{z \in [0,\Na]} \frac{|\sum_{m=0}^{M_{\text{SNR}}}\tilde{A}_{m}\left(z,\bm{x}\right)|^2}{\hat{\sigma}^2}.
\end{align}
The \ac{snr} can then be estimated as $\hat{\gamma}/K$.

In the \ac{toa} measurement stage, the transmitter selects the \ac{ofdm} allocation optimized for an \ac{snr} closest to the estimated \ac{snr} and then transmits $M_{\text{TOA}} = \SI{25e3}{}$ symbols generated from the chosen allocation. Similar to the \ac{snr} measurement stage, the receiver USRP then computes the complex correlation function, denoted for symbol $m$ as $\tilde{A}_{m}\left(z,\bm{x}\right)$ for $m = 0, 1, \ldots, M_{\text{TOA}}-1$. The receiver then estimates the \ac{toa} using the noncoherent estimator in (\ref{eq:mle_incoh}), which is expressed in units of samples as
\begin{align}
	\hat{z}_{m} = \max_{z \in [0,\Na]} |\tilde{A}_{m}\left(z,\bm{x}\right)|^2.
\end{align}
Prior to estimating the \ac{rmse}, the receiver discards the first $M_{\text{init}} = 50$ \ac{toa} estimates to eliminate any errors that could be caused by samples collected before the USRPs' RF components have settled to their tuning configuration. These measurements have a mean
\begin{align}
	\bar{z} = \frac{1}{M_{\text{TOA}} - M_{\text{init}}}\sum_{m=M_{\text{init}}}^{M_{\text{TOA}}-1} \hat{z}_m
\end{align}
and an unbiased \ac{rmse}
\begin{align}
	\hat{\sigma}_{z} = \sqrt{\frac{\sum_{m=M_{\text{init}}}^{M_{\text{TOA}}-1} (\hat{z}_m - \bar{z})^2 }{M_{\text{TOA}}-M_{\text{init}}-1}}.
\end{align}
This \ac{rmse} can be expressed in units of seconds by scaling $\hat{\sigma}_{z}$ by $\Ts$ or expressed in units of meters by scaling $\hat{\sigma}_{z}$ by $c\Ts$, where $c$ is the speed of light in \SI{}{\meter\per\second}. This measurement process was repeated for transmit gains from \SI{6}{\decibel} to \SI{33}{\decibel} in intervals of \SI{3}{\decibel}.

\subsection{Measurement Results}

% Show real-world results
\begin{figure}[h!]
	\centering
	\includegraphics[width=0.9\linewidth]{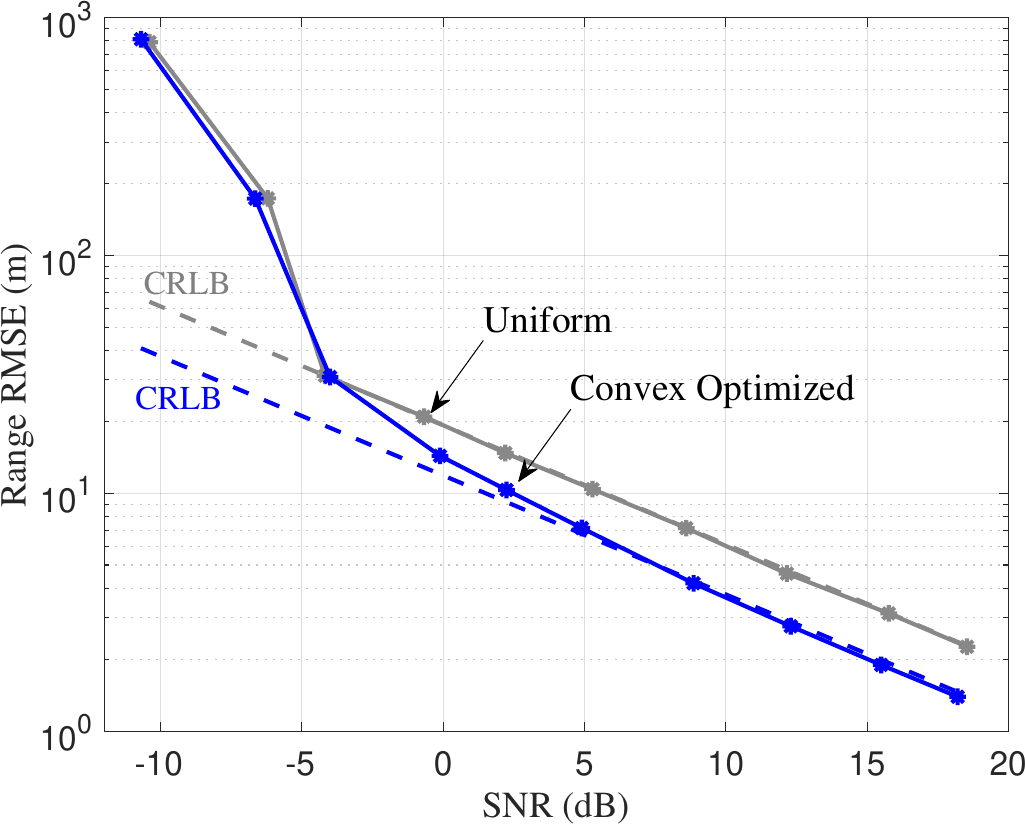}
	\caption{Plots of the \ac{toa} \acp{rmse} measured on the \ac{sdr} platform using both the noncoherent optimized power allocation and a uniform power allocation. Results are compared against the \ac{crlb}.}
	\label{fig:sdr_incoh}
\end{figure}

\begin{figure}[h!]
	\centering
	\includegraphics[width=0.9\linewidth]{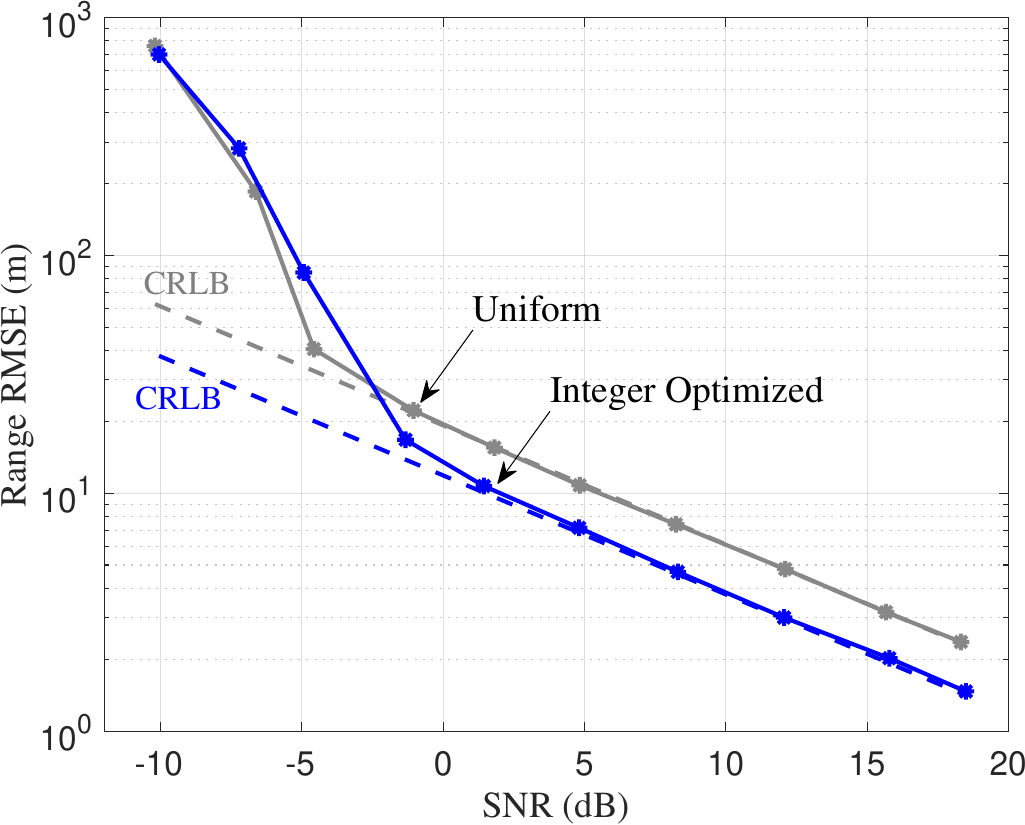}
	\caption{Plots of the \ac{toa} \acp{rmse} measured on the \ac{sdr} platform using both the noncoherent integer-optimized power allocation and a uniform power allocation. Results are compared against the \ac{crlb}.}
	\label{fig:sdr_incoh_bin}
\end{figure}

Fig.~\ref{fig:sdr_incoh} plots the measured \acp{rmse} against the estimated \acp{snr} collected on the \ac{sdr} platform using both a uniform allocation and the noncoherent convex-optimized allocation. As in the theoretical results, the noncoherent convex-optimized allocation achieves an \ac{rmse} much better than the uniform allocation for moderate to high \acp{snr}. Above \SI{1}{\decibel} \ac{snr}, the measured \acp{rmse} for both allocations closely follow their \acp{crlb}.

Fig.~\ref{fig:sdr_incoh_bin} plots the measured \acp{rmse} against the estimated \acp{snr} collected on the \ac{sdr} platform using both a uniform allocation and the noncoherent integer-optimized allocation. Compared to the convex-optimized allocation, the measured \ac{rmse} from the integer-optimized allocation diverges from the \ac{crlb} at a higher \ac{snr} and has greater errors than the measured \ac{rmse} from the uniform allocation below approximately \SI{-3}{\decibel} \ac{snr}. This crossing point occurs at a slightly higher \ac{snr} than in the theoretical results, which is expected since the propagation channel is not an ideal \ac{awgn} channel in practice. Despite its worse performance at low \ac{snr}, the integer-optimized allocation may be a desirable allocation since it only requires $8$ pilot subcarriers, allowing data or other resources to be multiplexed in frequency unlike the uniform allocation which requires all $64$ subcarriers.

% TODO
% Collect measurements for the coherent case??

\section{Conclusions}
\label{sec:conclusion}
This paper has demonstrated how \ac{ofdm} pilot allocations can be optimized to minimize \ac{toa} estimation errors in a manner that accounts for the low \ac{snr} thresholding effects caused by the presence of sidelobes in the signal's  \ac{acf}. This optimization was conducted by minimizing the \ac{zzb} and was analyzed under both coherent and noncoherent reception. This paper proved the convexity of the problem, provided readily-usable expressions for the gradients, and solved for optimal allocations over a wide range of \acp{snr}. The theoretical \ac{toa} error variances achieved by the optimal allocations were compared against the error variances achieved by a uniform allocation. Integer constraints were then introduced into the optimization problem, in which the pilot allocations are restricted to a sparse selection of subcarriers with equally-distributed power. A branch-and-bound algorithm was proposed for solving this integer-constrained problem to achieve near-optimal allocations that only require a sparse subset of subcarriers, solving for these allocations over a wide range of \acp{snr} and comparing the theoretical \ac{toa} error variances against uniform allocations. The real-world applicability of these pilot allocations was demonstrated by measuring the \ac{toa} errors obtained from \acp{sdr} transmitting and receiving the optimized signals.

These results illustrate how intelligent allocation of pilot resources within \ac{ofdm} signals can yield notable improvements in \ac{toa} estimation, even when constrained to using a sparse subset of subcarriers. Such optimal allocations will be a key enabler of dual-functional and \ac{isac} systems that aim to achieve precise \ac{toa}-based user positioning with a minimal number of resources allocated for positioning, freeing up resources for communications.

\section*{Acknowledgments}

This work was supported by the U.S. Space Force under an STTR contract with
Coherent Technical Services, Inc., and by affiliates of the 6G@UT center within
the Wireless Networking and Communications Group at The University of Texas at
Austin.

\appendices

\section{Coherent Error Probability Gradient and Hessian}
\label{sec:app_coh_grad_hess}
First, the gradients of the coherent probability of error $\PminCR$ with respect to the mapped power allocations $\tilde{\bm{\rho}}$, expressed in (\ref{eq:grad_pmin_coh}), will be derived. Taking the partial derivatives of (\ref{eq:Pmin_coh}) yields
\begin{align}
	&\frac{\partial}{\partial \tilde{\rho}_n} \PminCR \nonumber\\
	&= \frac{\partial}{\partial \tilde{\rho}_n} \frac{1}{2} - \frac{1}{2} \text{erf}\left(\sqrt{ \frac{\gamma}{2} \left( 1 - \ACcR \right) } \right) \nonumber\\
	&= \frac{\partial}{\partial \tilde{\rho}_n} \frac{1}{2} - \frac{1}{\sqrt{\pi}} \int_{0}^{\sqrt{ \frac{\gamma}{2} \left( 1 - \ACcR \right) }} \exp(-t^2) dt \nonumber\\
	&= \frac{-1}{\sqrt{\pi}} \exp\left( \frac{-\gamma}{2} \left( 1 {-} \ACcR \right) \right) \nonumber\\
	&\quad\quad\times\left( \frac{\partial}{\partial \tilde{\rho}_n}  \sqrt{ \frac{\gamma}{2} \left( 1 {-} \ACcR \right) } \right) \nonumber\\
	&= \frac{-1}{2\sqrt{\pi}} \exp\left( \frac{-\gamma}{2} \left( 1 {-} \ACcR \right) \right) \frac{\frac{\partial}{\partial \tilde{\rho}_n} \frac{-\gamma}{2}\ACcR}{\sqrt{ \frac{\gamma}{2} \left( 1 {-} \ACcR \right) } } \nonumber\\
	&= \frac{\sqrt{\gamma}}{2\sqrt{2\pi}} \exp\left( \frac{-\gamma}{2} \left( 1 {-} \ACcR \right) \right) \frac{\cos(2\pi z d[n{+}1]/K) {-} 1}{\sqrt{ 1 {-} \ACcR } }
	\label{eq:grad_coh_deriv}
\end{align}

Next, the Hessian of $\PminCR$ with respect to $\tilde{\bm{\rho}}$ is derived. Taking the partial derivatives of (\ref{eq:grad_coh_deriv}) yields
\begin{align}
	&\frac{\partial^2}{\partial \tilde{\rho}_n \partial \tilde{\rho}_m} \PminCR \nonumber\\
	&= \frac{\partial}{\partial \tilde{\rho}_m} \overbrace{\frac{\sqrt{\gamma}}{2\sqrt{2\pi}} \exp\left( \frac{-\gamma}{2} \left( 1 - \ACcR \right) \right)}^{A_{2}} \nonumber\\
	&\quad\times \underbrace{\frac{\cos(2\pi z d[n{+}1]/K) - 1}{\sqrt{ 1 - \ACcR } } }_{B_{2}} \nonumber\\
	&= \left(\frac{\partial}{\partial \tilde{\rho}_m} A_{2}\right) B_{2} + A_{2} \left(\frac{\partial}{\partial \tilde{\rho}_m} B_{2}\right).
	\label{eq:hess_coh_deriv}
\end{align}
The partial derivatives of $A_{2}$ and $B_{2}$ are
\begin{align}
	\frac{\partial}{\partial \tilde{\rho}_m} A_{2}
	%&= \frac{\sqrt{\gamma}}{2\sqrt{2\pi}} \exp\left( \frac{-\gamma}{2} \left( 1 - \ACcR \right) \right) \left( \frac{\partial}{\partial \tilde{\rho}_n}  \frac{-\gamma}{2} \left( 1 - \ACcR \right) \right) \nonumber\\
	&= \frac{\gamma^{3/2}}{4\sqrt{2\pi}} \exp\left( \frac{-\gamma}{2} \left( 1 - \ACcR \right) \right) \nonumber\\
	&\times \left(\cos(2\pi z d[m{+}1]/K) - 1\right),
\end{align}
and
\begin{align}
	&\frac{\partial}{\partial \tilde{\rho}_m} B_{2} \nonumber\\
	&= \frac{-(\cos(2\pi z d[n{+}1]/K) - 1)(\cos(2\pi z d[m{+}1]/K) - 1)}{ 2\left( 1 - \ACcR \right)^{3/2} }.
\end{align}
Through some algebra, (\ref{eq:hess_coh_deriv}) becomes
\begin{align}
	&\frac{\partial^2}{\partial \tilde{\rho}_n \partial \tilde{\rho}_m} \PminCR \nonumber\\
	&= \frac{\sqrt{\gamma}}{4\sqrt{2\pi}} \exp\left( \frac{-\gamma}{2} \left( 1 - \ACcR \right) \right) \nonumber\\
	&\times \frac{\left(\cos(2\pi z d[n{+}1]/K) - 1\right)\left(\cos(2\pi z d[m{+}1]/K) - 1\right)}{\sqrt{1-\ACcR}} \nonumber\\
	&\times \left( \gamma + \left(1-\ACcR\right)^{-1} \right).
\end{align}

\section{Noncoherent Error Probability Gradient}
\label{sec:app_noncoh_grad}
The gradients of the noncoherent probability of error $\PminNR$ with respect to the mapped power allocations $\tilde{\bm{\rho}}$, expressed in (\ref{eq:grad_pmin_incoh}), are derived.
Through application of the multivariable chain rule, the partial derivatives of the Marcum-Q term are
\begin{align}
	\frac{\partial}{\partial \tilde{\rho}_n} Q_{1}(a,b) = &\left(\frac{\partial}{\partial \tilde{\rho}_n} a\right)\left(\frac{\partial}{\partial a} Q_{1}(a,b)\right) \nonumber\\
	+ &\left(\frac{\partial}{\partial \tilde{\rho}_n} b\right)\left(\frac{\partial}{\partial b} Q_{1}(a,b)\right).
	\label{eq:app_noncoh_q}
\end{align}
The partial derivatives of the Marcum-Q with respect to its inputs $a$ and $b$ are \cite{annamalai2008simple}
\begin{align}
	\frac{\partial}{\partial a} Q_{1}(a,b) &= b \exp{\left(-\gamma/2\right)} I_{1}(ab),\\
	\frac{\partial}{\partial b} Q_{1}(a,b) &= -b \exp{\left(-\gamma/2\right)} I_{0}(ab),
\end{align}
while the partial derivatives of $a$ and $b$ with respect to $\tilde{\bm{\rho}}$ are
\begin{align}
	\frac{\partial}{\partial \tilde{\rho}_n} a &=  \frac{\frac{\sqrt{\gamma}}{4\sqrt{2}}\left(\frac{\partial}{\partial \tilde{\rho}_n} \ACnR\right)  }{\sqrt{1-\ACnR}\sqrt{1-\sqrt{1-\ACnR}}},\\
	\frac{\partial}{\partial \tilde{\rho}_n} b &= \frac{\frac{-\sqrt{\gamma}}{4\sqrt{2}}\left(\frac{\partial}{\partial \tilde{\rho}_n} \ACnR\right)  }{\sqrt{1-\ACnR}\sqrt{1+\sqrt{1-\ACnR}}}.
	\label{eq:app_grad_zzb_incoh}
\end{align}
The partial derivatives of the noncoherent  autocorrelation function with respect to $\tilde{\bm{\rho}}$ are
\begin{align}
	&\frac{\partial}{\partial \tilde{\rho}_n} \ACnR \nonumber\\
	&= \sum_{k=0}^{K-2} 2\tilde{\rho}[k] (\cos(2 \pi z d[k+1] / K)-1) \cos(2 \pi z d[n{+}1] / K) \nonumber\\
	&+ \sum_{k=0}^{K-2} 2\tilde{\rho}[k] \sin(2 \pi z d[k+1] / K) \sin(2 \pi z d[n{+}1] / K).
	\label{eq:app_noncoh_acf}
\end{align}
Simple substitution of (\ref{eq:app_noncoh_q}-\ref{eq:app_noncoh_acf}) yields the gradients of the Marcum-
Q term in (\ref{eq:grad_pmin_incoh}). Next, the partial derivatives of the modified Bessel function term in (\ref{eq:grad_pmin_incoh}) are
\begin{align}
	&\frac{\partial}{\partial \tilde{\rho}_n} \frac{-1}{2}\exp{\left(\sfrac{-\gamma}{2}\right)}I_{0}(ab) \nonumber\\
	&= \frac{\partial}{\partial \tilde{\rho}_n} \frac{-1}{2}\exp{\left(\sfrac{-\gamma}{2}\right)}I_{0}(\sfrac{\gamma}{2}\sqrt{\ACnR}) \nonumber\\
	&= \frac{-1}{2}\exp{\left(\sfrac{-\gamma}{2}\right)}I_{1}(\sfrac{\gamma}{2}\sqrt{\ACnR})\frac{\gamma\left(\frac{\partial}{\partial \tilde{\rho}_n} \ACnR\right)}{4\sqrt{\ACnR}}.
	\label{eq:app_noncoh_bessel}
\end{align}
Finally, the expressions for (\ref{eq:app_noncoh_q}) and (\ref{eq:app_noncoh_bessel}) can be substituted into (\ref{eq:grad_pmin_incoh}) to obtain the desired gradient of $\PminNR$.

\bibliographystyle{IEEEtran} 
\bibliography{pangea}
\end{document}